\newenvironment{stretchpars}
 {\par\setlength{\parfillskip}{0pt}}
 {\par}
\renewcommand{\paragraph}[1]{\vspace{0.35em}\noindent\textbf{#1}}
\newcommand\remove[1]{}
\newcommand{\hlcolor}{Yellow!35}
\newcommand{\hlcolorTwo}{LimeGreen!35}
\newenvironment{btHighlight}[1][]
{\begingroup\tikzset{bt@Highlight@par/.style={#1}}\begin{lrbox}{\@tempboxa}}
{\end{lrbox}\bt@HL@box[bt@Highlight@par]{\@tempboxa}\endgroup}
\newcommand\btHL[1][]{%
  \begin{btHighlight}[#1]\bgroup\aftergroup\bt@HL@endenv%
}
\def\bt@HL@endenv{%
  \end{btHighlight}%
  \egroup
}
\newcommand{\bt@HL@box}[2][]{%
  \tikz[#1]{%
    \pgfpathrectangle{\pgfpoint{1pt}{0pt}}{\pgfpoint{\wd #2}{\ht #2}}%
    \pgfusepath{use as bounding box}%
    \node[anchor=base west, fill=\hlcolor,outer sep=0pt,inner xsep=1pt, inner ysep=0pt, rounded corners=2pt, minimum height=\ht\strutbox+2pt,#1]{\raisebox{1pt}{\strut}\strut\usebox{#2}};
  }%
}
\newenvironment{btHighlightTwo}[1][]
{\begingroup\tikzset{bt@HighlightTwo@par/.style={#1}}\begin{lrbox}{\@tempboxa}}
{\end{lrbox}\bt@HLTwo@box[bt@HighlightTwo@par]{\@tempboxa}\endgroup}
\newcommand\btHLTwo[1][]{%
  \begin{btHighlightTwo}[#1]\bgroup\aftergroup\bt@HLTwo@endenv%
}
\def\bt@HLTwo@endenv{%
  \end{btHighlightTwo}%
  \egroup
}
\newcommand{\bt@HLTwo@box}[2][]{%
  \tikz[#1]{%
    \pgfpathrectangle{\pgfpoint{1pt}{0pt}}{\pgfpoint{\wd #2}{\ht #2}}%
    \pgfusepath{use as bounding box}%
    \node[anchor=base west, fill=\hlcolorTwo,outer sep=0pt,inner xsep=1pt, inner ysep=0pt, rounded corners=2pt, minimum height=\ht\strutbox+2pt,#1]{\raisebox{1pt}{\strut}\strut\usebox{#2}};
  }%
}
\footnotesize\color{Gray},
\newcommand*{\ColorIfNotInString}[1]{\iftoggle{InString}{#1}{\color{blue}#1}}%
\newcommand*{\ProcessQuote}[1]{#1\iftoggle{InString}{\global\togglefalse{InString}}{\global\toggletrue{InString}}}%
\definecolor{code_indent}{HTML}{CCCCCC}
\newenvironment{figureAsListing}
    {
   \addtocounter{figure}{-1}
   \refstepcounter{lstlisting}

    \begin{figure}[!htbp]
        \centering
    }
    { 
        \end{figure} 
    }
\newenvironment{figureAsListingWide}
    {
    \addtocounter{figure}{-1}
   \refstepcounter{lstlisting}

    \begin{figure*}[!htbp]
        \centering
    }
    { 
        \end{figure*} 
    }
\newcommand{\mynote}[3]{
    \fbox{\bfseries\sffamily\scriptsize#1}
{\small$\blacktriangleright$\textsf{\emph{\color{#3}{#2}}}$\blacktriangleleft$}}
\def\REVIEW{0}  
    \newcommand{\va}[1]{\mynote{Vitaly}{#1}{magenta}}
    \newcommand{\pk}[1]{\mynote{Petr}{#1}{blue}}
    \newcommand{\nk}[1]{\mynote{Nikita}{#1}{red}}
    \newcommand{\ap}[1]{\mynote{Anton}{#1}{cyan}}
    \newcommand{\va}[1]{}
\newcommand{\pk}[1]{}
\newcommand{\nk}[1]{}
\newcommand{\ap}[1]{}
\author{Vitaly Aksenov}
\affiliation{%
\institution{City, University of London}
\city{London}
\country{UK}}
\author{Nikita Koval}
\affiliation{%
\institution{JetBrains}
\city{Amsterdam}
\country{The Netherlands}}
\author{Petr Kuznetsov}
\affiliation{%
\institution{T\'el\'ecom Paris, Institut Polytechnique de Paris}
\city{Paris}
\country{France}}
\author{Anton Paramonov}
\affiliation{%
\institution{EPFL}
\city{Lausanne}
\country{Switzerland}}
\keywords{concurrency, memory overhead, bounded queue, memory-optimality}
\begin{document}

\title{Memory Bounds for Concurrent Bounded Queues}

\setcopyright{none}
\renewcommand\footnotetextcopyrightpermission[1]{} 
\settopmatter{printacmref=false, printfolios=false}

\begin{abstract}
%
Concurrent data structures often require additional memory for handling synchronization issues in addition to memory for storing elements.   
Depending on the amount of this additional memory, implementations can be more or less \emph{memory-friendly}.
A \emph{memory-optimal} implementation enjoys the minimal possible memory overhead, which, in practice, reduces cache misses and unnecessary memory reclamation.

In this paper, we discuss the memory-optimality of non-blocking \emph{bounded queues}. Essentially, we investigate the possibility of constructing an implementation that utilizes a pre-allocated array to store elements and constant memory overhead, e.g., two positioning counters for \texttt{enqueue(..)} and \texttt{dequeue()} operations. Such an implementation can be readily constructed when the ABA problem is precluded, e.g., assuming that the hardware supports LL/SC instructions or all inserted elements are distinct. However, in the general case, we show that a memory-optimal non-blocking bounded queue incurs linear overhead in the number of concurrent processes. 
These results not only provide helpful intuition for concurrent algorithm developers but also open a new research avenue on the memory-optimality phenomenon in concurrent data structures.

\end{abstract}

\maketitle


\section{Introduction}
Developing concurrent algorithms is complex, and limiting synchronization is crucial in achieving high performance. Mutual exclusions allow building concurrent implementations on top of standard sequential algorithms at the expense of scalability. \textit{Non-blocking} synchronization has the potential to improve scalability, but it often requires allocating extra memory. The classic Michael-Scott queue~\cite{MS96} serves as an illustrative example, allocating a new node for each enqueue and, therefore, using a significant amount of memory for node objects; this design results in increased memory usage and cache misses. Modern queues~\cite{gidenstam2010cache,morrison2013fast,yang2016wait} incorporate several elements into a single node, significantly reducing memory allocation and creating a more compact memory structure that mitigates cache misses. We consider these solutions more \emph{memory-friendly}, as they allocate less additional memory. The intuition is that the more memory-friendly the implementation is, the better and more predictable performance it usually provides under both high and low contention. This is especially important for general-purpose solutions in standard libraries of programming languages and popular frameworks.

In this paper, we raise the question of \emph{memory-optimality}: \textbf{what is the absolute minimum of memory overhead a concurrent data structure must incur?} 
We narrow our attention to \emph{bounded queues}, ubiquitous in resource management systems and task schedulers, e.g., \texttt{io\_uring} in Linux kernel~\cite{iouring}, or high-speed networking and storage libraries such as DPDK~\cite{DPDK} and SPDK~\cite{SPDK}.
%
Bounded queues allow for a straightforward definition of memory overhead: the amount of memory that must be allocated on top of the fixed memory required for storing the queue elements.

\setlength{\columnsep}{1em}%
\begin{wrapfigure}[9]{r}{0.38\linewidth}
\begin{center}
    \vspace{-1.3em}
    \includegraphics[width=1\linewidth]{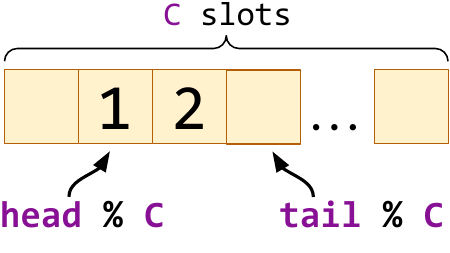}
    \caption{Sequential bounded queue on top of an array of $C$ slots and two counters.}
    \label{fig:sequential}
    \vspace{0.2em}
\end{center}
\end{wrapfigure}

A straightforward sequential bounded queue implementation maintains an array of size $C$ (the queue capacity) equipped with two counters that track the total numbers of \texttt{enqueue(..)} and \texttt{dequeue()} invocations; their values modulo $C$ point to the next working slots of the operations. See Figure~\ref{fig:sequential} for an illustration. This implementation utilizes exactly $C$ array slots to store elements and two memory locations for the counters, producing the $\Theta(1)$ memory overhead.

%

  

Can we build a \emph{concurrent} bounded queue with the same memory footprint?
A trivial solution that utilizes coarse-grained locking incurs constant overhead required for the lock but is inefficient under high load. 
To the best of our knowledge, most fast and scalable concurrent queues are \emph{non-blocking}, e.g., \cite{morrison2013fast,yang2016wait,LPRQ,nikolaev2019scalable}. 
While the classic Michel-Scott queue~\cite{MS96} algorithm allocates a new node for each element, these modern solutions pack multiple elements into a single node, making the implementation more memory-friendly. 
Nevertheless, the memory overhead remains linear in the number of elements. 

For \emph{bounded} queues, it is natural to utilize a pre-allocated array of $C$ slots to store elements. 
This raises an interesting question: is it possible to design an algorithm where the memory overhead depends on neither the queue capacity nor the number of concurrent processes? 
Can we architect a non-blocking bounded queue using just an array for elements and two positioning counters for \texttt{enqueue(..)} and \texttt{dequeue()}, similarly to the sequential implementation?

\vspace{-0.1em}
\paragraph{Our contribution.} 
We first observe that the primary challenge in building a concurrent bounded queue is the ABA problem~\cite{herlihy2020art}.
%
%
When the ABA problem is precluded, e.g., when all elements are distinct or using the \texttt{LL/SC} synchronization primitives instead of \texttt{CAS}, we show the possibility of designing an algorithm with \emph{constant} memory overhead (Section~\ref{sec:special-cases}).  

In the general case, we prove that any \emph{obstruction-free}~\cite{OF}\footnote{Obstruction-freedom is the weakest non-blocking progress condition, which guarantees progress in any chosen thread when all the others are paused~\cite{OF}.} bounded queue implementation must use additional $\Omega(T)$ memory locations for synchronization, where $T$ is the number of processes (Section~\ref{sec:lb}). 
To show that, we construct a non-linearizable execution for any algorithm that utilizes fewer memory locations. 
The lower bound appears to be tight: we present an algorithm with $O(T)$ memory overhead.

\vspace{-0.1em}
\paragraph{Practical impact.}
In our industrial experience, we witnessed numerous attempts to design a concurrent bounded queue with constant overhead (e.g., on a pre-allocated array with positioning counters). All of these attempts resorted to practical trade-offs like periodic memory allocation, blocking behavior, or relaxed semantics. Our results inform the practitioners that these trade-offs are unavoidable, potentially saving a tremendous amount of time and mental energy for those thinking otherwise.  

Notably, most modern \emph{unbounded} concurrent queues pack multiple elements into fixed-capacity segments and try to reuse them~\cite{LPRQ,morrison2013fast,nikolaev2019scalable}. 
These segments, also known as \emph{ring buffers}, essentially are bounded queues with slightly relaxed semantics, allowing \texttt{enqueue(..)} to fail spuriously and close the segment for further additions. 
To reuse the segments, they equip each slot with a 64-bit epoch value, leading to $\Theta(C)$ memory overhead, where $C$ is the segment capacity. 
%

Notice that our bounds do not necessarily imply that ring buffers, widely used in real systems, are impractical. However, these ring buffers should relax the semantics, relax the progress guarantee, or employ non-constant memory overhead. While one may \emph{share} this overhead between \emph{multiple} bounded queues, our primary result is the impossibility of achieving constant overhead, which significantly affects the algorithm design and its trade-offs.
%
%
%


\vspace{-0.1em}
\paragraph{Theoretical impact.}
This work opens a wide research avenue on memory-friendly and memory-optimal concurrent computing. 
It sets strict bounds on memory overhead in dynamic data structures and provides insights on the optimal memory overhead their implementations can achieve. 
We believe that the proposed theoretical framework can be extended to other bounded and unbounded data structures, leading to more practical implications.

\vspace{-0.2em}
\section{Memory Overhead: The Intuition}\label{sec:special-cases}
\vspace{-0.1em}

%
We start by developing intuition on memory-friendliness and the challenges in attaining constant memory overhead. To illustrate the latter, we present solutions that work under specific assumptions, thereby shedding light on the fundamental obstacles. 

%
%
%

\vspace{-0.3em}
\subsection{Simplest Memory-Friendly Queue}
\vspace{-0.1em}

We start with a \emph{memory-friendly} bounded queue algorithm based on the standard linked-list queue. 
While it is not memory-optimal, it can be tuned to be more or less memory-friendly, thus developing an intuition of memory-friendliness. 
Note that further we consider implementations that use a more suitable container~--- the circular buffer.

The idea is to build a bounded queue on a conceptually infinite array with \texttt{head} and \texttt{tail} counters directly pointing to the array slots. Similar to the Michael-Scott queue design~\cite{MS96}, \texttt{enqueue(..)} first installs the element in the next empty slot, followed by the \texttt{tail} counter increment. Similarly, \texttt{dequeue()} begins by extracting the first element, incrementing its \texttt{tail} counter after that. Listing~\ref{lst:arr} presents the algorithm. As each \texttt{enqueue-dequeue} pair manipulates a unique array slot, the same element cannot be installed into the same slot, so the ABA problem is naturally eliminated.

%
%



\paragraph{Infinite array.} 
We suggest implementing the infinite array as a concurrent linked list of fixed-size segments, following the approach proposed for synchronous queues and channels in Kotlin Coroutines~\cite{koval2019scalable,CHANNELS_PPOPP23}. All segments are marked with a unique \texttt{id} and follow each other; Figure~\ref{fig:inf_arr} illustrates the high-level structure. To access the \texttt{i}-th cell in the infinite array, we find
the segment with \texttt{id\:=\:i\:/\:K} (adding it to the linked list if needed) and go to the cell \texttt{i\:\%\:K} in it. 

\begin{figure}[H]
\centering
\includegraphics[width=0.9\linewidth]{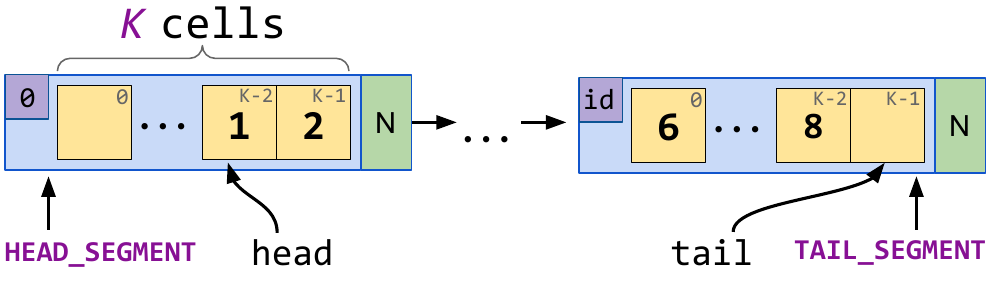}
\caption{High-level structure of the memory-friendly bounded queue based on a conceptually infinite array implemented via a concurrent linked list of fixed-size segments.}
\vspace{-0.4em}
\label{fig:inf_arr}
\end{figure}

\begin{figureAsListing}
\begin{lstlisting}
fun enqueue(e:#\:#E): Bool = while(true)#\:#{
 // Read the counters snapshot
 t := tail; h := head #\label{line:arr_enq_counters_read_start}#
 if t != tail: continue #\label{line:arr_enq_counters_read_end}#
 // Is the queue full?
 if t == h + C: return false #\label{line:arr_enq_check_fullness}#
 // Try to insert the element
 done := CAS(&a[t], #$\perp$#, e) #\label{line:arr_enq_cas}#
 // Increment the counter
 CAS(&tail, t, t + 1) #\label{line:arr_enq_inc_counter}#
 // Finish on success
 if done: return true 
}
fun dequeue(): E? = while(true) {

 // Read the counters snapshot
 t := tail; h := head #\label{line:arr_deq_counters_read_start}#
 if t != tail: continue #\label{line:arr_deq_counters_read_end}#
 // Is the queue empty?
 if h == t: return #$\perp$# #\label{line:arr_deq_empty}#
 // Try to extract an element 
 e := a[h] #\label{line:arr_deq_mod1}# 
 done := e != #$\perp$# && CAS(&a[h], e, #$\perp$#)#\label{line:arr_deq_mod2}#
 // Increment the counter
 CAS(&head, h, h + 1) #\label{line:arr_deq_inc}#
 // Finish on success
 if done: return e
}
\end{lstlisting}
\caption{Memory-friendly bounded queue of capacity \texttt{C} implemented on an infinite array.} 
\label{lst:arr}
\end{figureAsListing}

\paragraph{Memory overhead.}
Each segment has a memory overhead of $\Theta(1)$ to store its \texttt{id} and a pointer to the next segment. However, we still have to address the memory reclamation problem: once a segment no longer contains queue elements, it should be recycled, and the memory it occupies should be released. To make the memory overhead bounded, we suggest reusing segments by applying the technique to reclaim descriptors~\cite{tbrown_recyclable_desc}. This approach requires $\Theta(T)$ additional segments on top of the ones that store values, where $T$ is the number of concurrent processes. With storing $K$ elements in each segment, the overall solution exhibits $\Theta(C/K\:+\:T\cdot K)$ memory overhead.
Though not memory-optimal, this simple algorithm can be tuned to be more or less memory-friendly by adjusting the segment size $K$, achieving the minimum of $\Theta(T \cdot \sqrt{C})$ when choosing $K = \sqrt{C}$.

%
%

%
%
%

%

%
%


%

%
%

\subsection{Constant Overhead with Distinct Elements} \label{subsec:ppopp}
\begin{stretchpars}
We now describe an algorithm that uses only $O(1)$ additional memory under two reasonable assumptions. 
First, we require all inserting elements to be distinct, which is common in practice.
%
%
Second, the system should be provided with an unlimited supply of versioned $\perp$ (\texttt{null}) values, so we can replace unique elements with unique \texttt{$\perp$}-s on extractions. 
This can be achieved by stealing one bit from addresses (values) to mark them as \texttt{$\perp$}-s and using the rest to store the $\perp$ version. 
These assumptions help eliminate the \emph{ABA problem}. 

Essentially, both \texttt{enqueue(..)} and \texttt{dequeue()} (1)~read the counters in a snapshot manner, (2)~try to perform a ``round-valid'' update (\texttt{enqueue(..)} replaces $\bot_\mathtt{round}$ with the element while \texttt{dequeue()} does the opposite, where \texttt{round = counter\:/\:C}), and (3)~increase the counter via \texttt{CAS}.
The pseudocode is presented in Listing~\ref{lst:ppopp}.

\end{stretchpars}

\begin{lstlisting}[
caption={
Bounded queue algorithm with $O(1)$ additional memory that requires elements to be distinct and an unlimited supply of versioned $\perp$ (null) values.
},
label={lst:ppopp}
]
fun enqueue(e:#\:#E): Bool = while(true)#\:#{
 // Read the counters snapshot
 t := tail; h := head #\label{line:ppopp_enq_counters_read_start}\label{line:ppopp_enqueue_lin}#
 if t != tail: continue #\label{line:ppopp_enq_counters_read_end}#
 // Is the queue full?
 if t == h + C: return false  #\label{line:ppopp_enq_check_fullness}#
 // Try to insert the element
 round := t / C; i := t#\:#%#\:#C
 done := CAS(&a[i], #$\perp_{\mathtt{round}}$#, e)#\label{line:ppopp_enq_cas}#
 // Increment the counter
 CAS(&tail, t, t + 1)  #\label{line:ppopp_enq_inc_counter}#
 // Finish on success
 if done: return true 
}

fun dequeue(): E? = while(true) {
 // Read the counters+element snapshot
 t := tail; h := head; e := a[h#\:#%#\:#C]#\label{line:ppopp_deq_counters_read_start}\label{line:ppopp_deq_lin}#
 if t != tail: continue #\label{line:ppopp_deq_counters_read_end}#
 // Is the queue empty?
 if t == h: return #$\perp$#  #\label{line:ppopp_deq_empty}#
 // Try to extract the element
 round := h#\:#/#\:#C + 1; i := h#\:#%#\:#C
 done := e#\:#!=#\:##$\perp_{\mathtt{round}}$# && CAS(&a[i],#\:#e,#\:##$\perp_{\mathtt{round}}$#)#\label{line:ppopp_deq_cas}\label{line:ppopp_deq_check_null}#
 // Increment the counter
 CAS(&head, h, h + 1)  #\label{line:ppopp_deq_inc}#
 // Finish on success
 if done: return e
}
\end{lstlisting}

\subsection{Constant Overhead with LL/SC}\label{subsec:llsc}
%
Another way to avoid the ABA problem is to use the \texttt{LL/SC} (\emph{load-link/store-conditional}) ABA-immune synchronization primitives instead of \texttt{CAS}.
In this algorithm (Listing~\ref{lst:llsc}), both \texttt{enqueue(..)} and \texttt{dequeue()} begin by reading a snapshot of the counters and the corresponding array slot {---} this part remains unchanged. However, to update the slot state, we now use the \texttt{SC} primitive instead of \texttt{CAS}. By inserting the \texttt{LL} instruction before reading the snapshot, we ensure that the update fails if the cell state has been changed.

The algorithm shows that the \texttt{LL/SC} primitives are conceptually more powerful than \texttt{CAS}. The \texttt{LL/SC} instructions, though not widely available in programming languages, can be used in ARM, PowerPC, RISC-V, and some other architectures (with a certain risk of spurious failures).

\subsection{$\Theta(T)$ Overhead with DCSS}\label{subsec:dcss}
Another way to eliminate the ABA problem is to use the \texttt{Double-Compare-Single-Set} (\texttt{DCSS}) synchronization primitive. \texttt{DCSS(\&A, expectedA, updateA, \&B, expectedB)}

\begin{lstlisting}[
caption={
Bounded queue algorithm with $O(1)$ additional memory via \texttt{LL/SC}. This is a modification of the algorithm in Listing~\ref{lst:ppopp}; the changes are highlighted.
},
label={lst:llsc}
]
fun enqueue(e:#\:#E): Bool = while(true)#\:#{
 // Read the counters snapshot
 t := tail; h := head  #\label{line:llsc_enq_counters_read_start}#
 ##@state := LL(&a[t % C])@
 if t != tail: continue #\label{line:llsc_enq_counters_read_end}#
 // Is the queue full?
 if t == h + C: return false #\label{line:llsc_enq_check_fullness}#
 // Try to insert the element
 ##@done := state#\:#==#\:##\btHL$\perp$# && SC(&a[t#\:#%#\:#C],#\:#e)@#\label{line:llsc_enq_sc}#
 // Increment the counter
 if LL(&tail) == t: SC(&tail, t#\:#+#\:#1) #\label{line:llsc_enq_inc}#
 // Finish on success
 if done: return true
}

fun dequeue(): E = while(true) {
 // Read the counters+element snapshot
 h := tail; h := head  #\label{line:llsc_deq_counters_read_start}#
 ##@e := LL(&a[h % C])@
 if t != tail: continue #\label{line:llsc_deq_counters_read_end}#
 // Is the queue empty?
 if t == h: return #$\perp$# #\label{line:llsc_deq_empty}#
 // Try to extract the element
 ##@done := e != #\btHL$\perp$# && SC(&a[h#\:#%#\:#C], #\btHL$\perp$#)@ #\label{line:llsc_deq_sc}#
 // Increment the counter
 if LL(&head) == d: SC(&head, h#\:#+#\:#1) #\label{line:llsc_deq_inc}#
 // Finish on success
 if done: return e 
}
\end{lstlisting}

\noindent
checks that the values located by addresses \texttt{A} and \texttt{B} are equal to \texttt{expectedA} and \texttt{expectedB}, respectively, updating \texttt{A} to \texttt{updateA} and returning \texttt{true} if the check succeeds, and returning \texttt{false} otherwise.


We use \texttt{DCSS} to atomically update the slot and check that the corresponding counter has not been changed. If \texttt{DCSS} fails, the algorithm helps to increment the counter and restarts the operation. The rest is the same: getting a counters snapshot (including the slot state for \texttt{dequeue()}), checking that the queue is not full/empty, and trying to update the slot state, incrementing the operation counter at the end. Listing~\ref{lst:dcss} presents the pseudocode.

For the implementation of \texttt{DCSS} we can use a \emph{descriptors} approach presented in~\cite{harris2002practical}. In short, each \texttt{DCSS} call creates a descriptor, which describes the operation, and installs it in the updating location, thus, preventing updates while reading the second location and allowing other threads to help complete the operation. While the naive implementation requires allocating a descriptor object on each \texttt{DCSS} call, it is possible to recycle these descriptors so that only $2 T$ of them are required~\cite{tbrown_recyclable_desc}, thus, incurring $\Theta(T)$ additional memory.

\begin{lstlisting}[
caption={
Bounded queue algorithm with $O(T)$ overhead via recyclable \texttt{DCSS} descriptors. This is a modification of the algorithm in Listing~\ref{lst:ppopp}, the changes are highlighted.
},
label={lst:dcss}
]
fun enqueue(e:#\:#E): Bool = while(true)#\:#{
 // Read the counters snapshot
 t := tail; h := head #\label{line:dcss_enq_counters_read_start}#
 if t != tail: continue #\label{line:dcss_enq_counters_read_end}#
 // Is the queue full?
 if t == h + C: return false #\label{line:dcss_enq_check_fullness}#
 // Try to insert the element
 ##@done := DCSS(&a[t#\:#%#\:#C],#\:##\btHL$\perp$#,#\:#e,#\:#&tail,#\:#t)@ #\label{line:dcss_enq_dcss}#
 // Increment the counter
 CAS(&tail, t, t + 1) #\label{line:dcss_enq_inc_counter}#
 // Finish on success
 if done: return true
}

fun dequeue(): E = while(true) {
 // Read the counters+element snapshot
 t := tail; h := head #\label{line:dcss_deq_counters_read_start}#
 e := @DCSS_Read(a[h#\:#%#\:#C])@
 if t != tail: continue #\label{line:dcss_deq_counters_read_end}#
 // Is the queue empty?
 if t == h: return #$\perp$# #\label{line:dcss_deq_empty}#
 // Try to extract the element
 ##@done#\:#:=#\:#DCSS(&a[h#\:#%#\:#C],#\:#e,#\:##\btHL$\perp$#,#\:#&head,#\:#h)@ #\label{line:dcss_deq_dcss}#
 // Increment the counter
 CAS(&head, h, h#\:#+#\:#1)  #\label{line:dcss_deq_inc}#
 // Finish on success
 if done: return e
}
\end{lstlisting}
\vspace{1em}





\vspace{-0.3em}
\subsection{Values and Metadata}

Notably, the algorithms with the ``distinct elements'' assumption (Subsection~\ref{subsec:ppopp}) and \texttt{DCSS} (Subsection~\ref{subsec:dcss}) require an ability to store elements and metadata (distinct $\bot$-s and \texttt{DCSS} descriptors) in the same array slots.
In practice, to distinguish metadata from data, we either have to steal a bit from values, which results in a smaller universe of possible values, or use language constructs that allow distinguishing types but might lead to a linear memory overhead, such as storing \textsf{Object}s in Java (which incurs boxing when storing primitive values) or using \texttt{Variant} in Rust (which is essentially a wrapper). 
%
%



\vspace{0.4em}
\section{Memory Overhead: The Lower Bound}\label{sec:lb}
Recall that the memory overhead of a bounded queue refers to the extra memory necessary beyond that used to store the elements. We now prove that the minimal memory overhead of a linearizable non-blocking bounded queue is linear in the number of concurrent threads and show that this lower bound is tight by presenting a matching algorithm.

\subsection{The Lower Bound: Proof Overview}
\label{subsec:proof-overview}

We proceed by contradiction.
Suppose that there exists an obstruction-free linearizable bounded queue of capacity $C$ that employs $X < \frac{T}{24} - 1$ value-locations beyond the $C$ memory locations to store the queue elements; $T$ is the number of concurrent threads.
(``$24$'' here is a constant that simplifies the proof without affecting the desired $\Theta(T)$ lower bound.)
We construct a non-linearizable execution with such an implementation, showing that any obstruction-free linearizable bounded queue must incur $\Theta(T)$ memory overhead. 

To construct a non-linearizable execution, we first catch $X + 3$ threads (recall that $X$ is the memory overhead) in a state when they are about to update different memory locations and pause these threads. For simplicity, we assume that these threads will perform writes to the memory locations and never update them to \texttt{null} ($\bot$); we explain how to support updates via \texttt{CAS} and \texttt{null} values later. 

Then, carefully ensuring that the queue is empty, we successfully add $C$ elements $x_1, x_2, ..., x_C$ into the queue {---} following \texttt{dequeue()}s would have extracted these elements in the same order.
However, we do not extract the elements now. Instead, we choose a thread caught right before updating a memory location, which stores an element $x_i$ where $1 < i < C$, and is about to put $y$ there. Intuitively, this write will remove $x_i$ from the middle of the queue and put $y$ instead. As we have caught $X + 3$ threads, and the queue uses $X$ additional value-locations, at least three threads are about to update the location where the queue elements are stored and one of them must point to a middle element. Figure~\ref{fig:proof1} illustrates the resulting state. 

\begin{figure}[H]
    \centering
    \vspace{-1em}
    \includegraphics[width=0.85\linewidth]{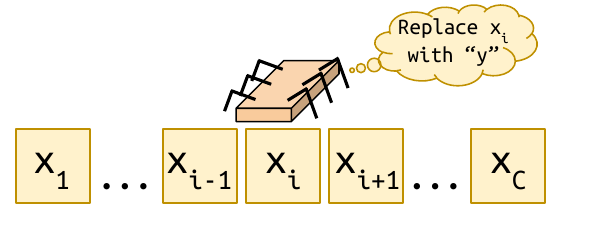}
    \caption{A thread is about to remove $x_i$ from the middle of the queue and put $y$ instead.}
    \vspace{-0.5em}
    \label{fig:proof1}
\end{figure}

We now replace the value $x_i$ with $y$ by invoking ``write'' in the caught thread. Assuming the bounded queue implementation does not differentiate elements, the state is now the same as when filling the queue with $x_1, x_2, ..., x_{i-1}, y, x_{i+1}, ..., x_C$ elements. Therefore, the following $C$ \texttt{dequeue()}-s successfully extract these elements. 

Intuitively, this execution is non-linearizable, as we have managed to put an element in the middle of the queue, observing ``old'' elements $x_1, x_2, ..., x_{i-1}$, then a ``new'' element $y$, which could not be installed concurrently as the queue is bounded, and followed by ``old'' elements $x_{i+1}, ..., x_C$. 




Note that this is a very simplified version of the proof dedicated to creating an intuition on constructing a  non-linearizable execution. To complete the proof, we need to be extremely careful with catching the threads, extending executions, and supporting \texttt{CAS} operations and \texttt{null} values. The following subsections describe the model we use, the reasonable algorithm assumptions that simplify the proof, and the full proof itself.

\subsection{Model}
\label{sec:model}
\vspace{-0.2em}

\paragraph{Threads and atomic operations.}
We consider a system of $T$ processes (threads) $p_1, \ldots, p_T$ communicating
by invoking \emph{primitives} on shared memory. Besides plain reads and writes, we assume \texttt{Compare-and-Set} primitive, denoted as $\texttt{CAS(\&a, old, new)}$, that atomically checks the value at address $a$ and, if it equals $\texttt{old}$, replaces it with \texttt{new} and returns \texttt{true}, returning \texttt{false} otherwise.

While we do not support \texttt{Fetch-and-Add} implicitly, it does not break the proof when not manipulating memory locations where elements are stored. In particular, using \texttt{Fetch-and-Add} for updating counters is allowed.

The \texttt{LL/SC} (load-link/store-conditional) synchronization primitives, however, are not allowed, as they are ABA-immune and enable bounded queue implementation with constant memory overhead, as discussed in Subsection~\ref{subsec:llsc}.

\paragraph{Bounded queue data structure.}
The \emph{Bounded Queue} (BQ) data structure parameterized with type \texttt{Type} and capacity $C$ stores elements of type \texttt{Type} and provides two operations:
\begin{itemize}[noitemsep,topsep=0pt]
    \item \texttt{enqueue($x$: Type)} (or \texttt{enq($x$)}): if the size of the queue is less than $C$, the operations adds value $x$ and returns \texttt{true}; otherwise, the operation returns \texttt{false}.
    \item \texttt{dequeue()} (or \texttt{deq()}) retrieves the oldest element from the queue, returning $\perp$ if the queue is empty.
\end{itemize}

Here, $\perp$ is a special \texttt{null} value, available in most programming languages for references. Also, we omit type \texttt{Type} when discussing BQ.



\paragraph{Implementation.}
Informally, an \emph{implementation} of a bounded queue $BQ$ is a distributed algorithm $A$ consisting of automata $A_1, \ldots, A_T$. 
Given an operation invoked on $BQ$ by $p_i$, $A_i$ specifies the sequence of primitive operations on the shared memory that $p_i$ needs to execute to compute a response.
An \emph{execution} of an implementation of $BQ$ is a sequence of \emph{events}: invocations and responses of high-level operations on $BQ$, as well as primitives applied by the processes on the shared-memory locations and the responses they return so that the automata $A_i$ and the primitive specifications are respected.
An operation is \emph{complete} in a given execution if a matching response follows its invocation.  
An incomplete operation can be \emph{completed} by inserting a matching response after invocation.
We only consider \emph{well-formed} executions: a process never invokes a new high-level operation before its previous operation returns.
An operation $o_1$ \emph{precedes} operation $o_2$ in execution $\alpha$ (we write $o_1\preceq_{\alpha} o_2$)  if the response of $o_1$ precedes the invocation of $o_2$ in $\alpha$.           

For simplicity, we only consider \emph{deterministic} implementations: all shared-memory primitives and automata $A_1,\ldots,A_T$ exhibit deterministic behavior.  
Our lower bound can be, however, easily extended to randomized implementations: in the worst case, we can suggest that a random uses some deterministic algorithm leading to the deterministic execution.

\paragraph{Concurrent execution.}
A concurrent execution $\alpha$ of a bounded queue is \emph{linearizable} ~\cite{herlihy1990linearizability,AW04} if all complete operations in it and a subset of incomplete ones can be put in a total order $S$ such that (1) $S$ respects the sequential specification of the bounded queue, and (2)  $\preceq_{\alpha} \subseteq \preceq_S$, i.e., the total order respects the \emph{real-time} order on operation in $\alpha$.  
Informally, every operation $o$ of $\alpha$ can be associated with a \emph{linearization point} put within the operation's \emph{interval}, the fragment of $\alpha$  between the invocation of $o$ and the response of $o$.   
An implementation is \emph{linearizable} if any its execution is linearizable.
%

\paragraph{Progress conditions.}
In this paper, we focus on \emph{non-blocking} implementations that, intuitively, do not involve any form of locking: a failure of a process does not prevent other processes from making progress.
We can also talk about non-blocking implementations of specific operations.

Popular liveness criteria are lock-freedom and obstruction-freedom~\cite{herlihy2020art}. 
An implementation of an operation is \emph{lock-free} if, in every execution, at least one process is guaranteed to complete every   
such operation it invokes in a finite number of its own steps. 
An implementation of an operation is \emph{obstruction-free} if it guarantees that if a process invokes this operation and, from some point on, takes sufficiently many steps \emph{in isolation}, i.e., without contending with other processes, then the operation eventually competes.
We say that an implementation is obstruction-free (resp., lock-free) if all its operations are obstruction-free (resp., lock-free).
Lock-freedom is a strictly stronger liveness criterion than obstruction-freedom: any lock-free implementation is also obstruction-free, but not vice versa~\cite{herlihy2020art}. 

\paragraph{Memory overhead.} Recall that the memory overhead of a bounded queue implementation is the amount of memory that must be allocated on top of the fixed memory required for storing the queue elements.


\subsection{Algorithm Restrictions}

\paragraph{Atomic operations.}
Recall the model supports read, write, and \texttt{Compare-and-Set} operations. For simplicity, we do not support \texttt{Fetch-and-Add}, but it does not break our proof when not manipulating memory locations where elements are stored. In particular, using \texttt{Fetch-and-Add} for updating counters is allowed.

\paragraph{Values and metadata.}
We restrict our attention to algorithms that assume a clear separation between \emph{value-locations}, used exclusively to store queue elements, and \emph{metadata-locations}, used to store everything else (e.g., counters or descriptors). Importantly, we impose no restrictions on the number of bits that can be stored in memory locations for both metadata and data, allowing metadata-locations to hold either value or metadata. Notably, we require that values and metadata be indistinguishable~--- metadata without context can be seen as a value. Making values and metadata distinguishable would require stealing a bit, automatically imposing $\Theta(C)$ memory overhead. Therefore, we find this restriction reasonable. Note that we allow to store $\perp$ (\texttt{null}) in both value- and metadata-locations.


%
%



\paragraph{Value-independence.}
Furthermore, we assume that the algorithm is agnostic to \emph{values}:
intuitively, the values passed to the operation only reflect the value-locations in the memory and not the steps performed by the algorithm.
We call such algorithms \emph{value-independent}.
A value-independent algorithm must be \emph{non-creative}, \emph{enqueue-value-independent} and \emph{dequeue-value-independent}.
Intuitively, (1)~the implementation of an operation can only manipulate with values stored in value-locations or provided as an argument (if it is an  \textsf{enqueue}), (2)~it does not matter what is enqueued, $x$ or $y$---the only difference is that $y$'s should be stored instead of $x$'s in the memory, and (3)~if the only difference in the memory is $x$ and $y$ in one value-location, a \textsf{dequeue} operation must return $y$ instead of $x$.
We present the formal definitions below.
To the best of our knowledge, all the state-of-the-art queue implementations satisfy these assumptions.

\begin{definition}\label{def:non-creative}
A BQ algorithm $A$ is \emph{non-creative} if any value written into a value-location (by write or \texttt{CAS}) during an operation $op$ or returned by it was previously read by $op$~from a value-location or received as an argument (if $op$ is \emph{enqueue}).
\end{definition}

\begin{definition}\label{def:put-arg-ind}
Let $S$ be the memory state reached by an execution $E$ of a BQ algorithm $A$.
Suppose $a_1, \ldots, a_k$ and~$v$ are distinct values that do not appear in value-locations in $S$.
Let a process $p$ be \emph{idle in $E$}, i.e., it did not execute any operation (neither enqueue nor dequeue) in $E$.
Let $S_1$ be the memory state reached from $S$ when $p$ sequentially executes \texttt{enq($a_1$)}, \ldots, \texttt{enq($a_i$)}, \ldots, \texttt{enq($a_k$)} and $S_2$ be the memory state of $A$ reached from $S$ when $p$ sequentially executes \texttt{enq($a_1$)}, \ldots, \texttt{enq($v$)},\ldots, \texttt{enq($a_k$)} ($v$ is inserted instead of $a_i$). 
$A$ is \emph{enqueue-value-independent} if $S_1$ is identical to $S_2$, except that every value-location storing $a_i$ in $S_1$ stores $v$ in $S_2$.    
\end{definition}

\begin{definition}\label{def:extract-arg-ind}
Let $S$ be a memory state reached by some execution $E$ of a BQ algorithm $A$.
Suppose that there is a value $x$ that appears only in one value-location in $S$ and let process $p$ be idle in $E$.
Let process $p$ sequentially apply \texttt{dequeue()} operations as long as they are successful, i.e., until the queue becomes empty, and 
suppose that exactly one of these \texttt{dequeue()} operations $op$ returns $x$.
$A$ is \emph{dequeue-value-independent} if replacing $x$ with $y$ in $S$ and replaying \texttt{dequeue()} operations by process $p$, we obtain an execution of $A$ in which $op$ now returns $y$.
\end{definition}

\begin{definition}\label{def:arg-ind}
A BQ implementation $A$ is \emph{value-independent} if it is \emph{non-creative}, \emph{enqueue-value-independent}, and \emph{dequeue-value-independent}.
\end{definition}

%
%


\subsection{Auxiliary Notions}

\begin{definition}
In the proof, we construct an execution where some processes are  \emph{paused} just before performing $CAS(l, x, y)$ operation. 
Such processes are called \emph{poised}, and the location $l$ is said to be \emph{covered} by that process.
\end{definition}

For simplicity, we treat $\textsf{l.write}(y)$ as $CAS(l, \ast, y)$ with  ``$\ast$'' matching any value. 

\begin{definition}
If a value $v$ was never used as an argument in the execution $E$, we say that $v$ is \emph{fresh} in $E$.

Let $E$ be a finite execution of $A$, and $p$ be an idle process in $E$.
A \emph{fill} procedure applied to $E$ (or just fill) consists of a process $p$ enqueueing $C$ different values in isolation.
We say that a fill procedure is \emph{successful} if all $C$ enqueues are successful and $C$ different value-locations store these values.

An \emph{empty} procedure (or just empty) consists of process $p$ executing $C$ dequeue operations in isolation.
We say that an empty procedure is \emph{successful} if all these $C$ dequeue operations are successful.

We say that a fill-empty-procedures pair is \emph{up-to-date} if both procedures are successful and dequeues from the empty procedure return the values enqueued by the fill.
\end{definition}

\begin{lemma}
    \label{lem:successful_fill_empty}
    Every finite execution has a \emph{solo} extension (where only a single process takes steps) that ends with an up-to-date fill-empty procedure. 
\end{lemma}

To prove Lemma \ref{lem:successful_fill_empty}, we need the following lemmas first.

\begin{definition}
    We say that the queue is \emph{logically empty} if it is in a state for which a new dequeue operation would return $\perp$ if executed in isolation.
\end{definition}

\begin{lemma}
\label{lem:restarts}
Let $S$ be a memory state produced by an execution $E$ in which the queue is logically empty queue and some of the processes are poised. 
If we extend $E$ with a \emph{fill} procedure followed by an \emph{empty} procedure and one of these procedures is not successful, then one of the operations executed by the posed processes ``took effect'', i.e., is linearized before any future operation.
\end{lemma}
\begin{proof}
We have a logically empty queue and we perform a \emph{fill} procedure.
Suppose this fill is not successful.
By definition, there can be two cases.
First, some enqueue operation is not successful then another enqueue operation succeded (exactly, $C$ enqueues succeed)~--- which can only be the one that was poised and, thus, it is linearized.
Second, all enqueue operations are successful, but not all $C$ new values are stored in the shared memory. Thus, there exists some poised dequeue operation that can return the non-stored value. But after the successful \emph{empty} procedure, it would mean that this poised operation is linearized.

After the \emph{fill} procedure, the logical state of the queue is full. Thus, $C$ dequeues should succeed. If the empty procedure is not successful, thus, some old poised dequeue takes place. This means that it was already linearized before that moment.
\end{proof}

\begin{lemma}
\label{lem:restarts2}
Consider some state of a system: a logically empty queue with poised operations. If we perform a successful \emph{fill} procedure followed by a successful \emph{empty} procedure and dequeue operations of the empty procedure return not the exactly same values that were enqueued by the fill procedure, then one of the poised operations ``took effect''.
\end{lemma}
\begin{proof}
Let us look at the value $v$ that was enqueued and was not dequeued by the empty procedure. In the end, the queue is empty; thus, the value $v$ was dequeued by some operation. Since it was not returned by any operation from the empty procedure, it should be some poised operation. Thus, after the empty procedure, this poised operation is linearized.
\end{proof}

By combining the last two lemmas, we construct the following result.

\begin{lemma}
\label{lem:restart}
Consider some state of a system: a logically empty queue with poised operations.
If a fill-empty-procedures pair applied to this state is not up-to-date, then one of the poised operations ``took effect'', i.e., is linearized before any future operations.
\end{lemma}

We are now ready to prove Lemma~\ref{lem:successful_fill_empty} and show that every execution can be extended by one working process with an up-to-date fill-empty procedure at the end. 

\begin{proof}[Proof of Lemma \ref{lem:successful_fill_empty}]
    In our proof, a dedicated process repeatedly performs alternating \emph{fill} and \emph{empty} procedures, 
    until the corresponding pair is up-to-date.
    By Lemma~\ref{lem:restart}, the number of such iterations cannot exceed $T$, as in our proof we catch at most $T-1$ concurrent operations
\end{proof}



\subsection{The Lower Bound: Complete Proof}
\label{sec:app:lb}

We now present a complete proof of the memory overhead lower bound. Specifically, we prove Theorem~\ref{thm:lb} below. 

\begin{theorem}
 \label{thm:lb}
Any obstruction-free linearizable value-in\-de\-pen\-dent implementation of Bounded Queue with capacity $C$ uses $C + \lfloor \frac{T - 24}{24} \rfloor$ value-locations, assuming that there exists at least $\Theta(T \cdot C)$ different values and $\frac{T}{2} < C$.
\end{theorem}

Recall that the key idea is to construct a non-linearizable execution. 
Before diving deeper into details, we describe the execution we target to construct and prove that it is non-linearizable.

\begin{lemma}
    \label{lem:universal}
    Let $k = 2X + 3$. 
    Let $E$ be an execution that results in a state that has $k$ value-locations $l_1, \ldots, l_k$, which store distinct values $x_1, \ldots, x_k$, and $x_i$ can be replaced with $y_i$ in each value-location $l_i$ by poised \texttt{CAS}-s.
    %
    Also, all $x_i$ are distinct, and for all $i$, $x_i \neq y_i$, and $x_i, y_i$ are non-$\bot$.
    
    Then we can extend $E$ to a non-linearizable execution.
\end{lemma}
\begin{proof}
    First, we extend $E$ using one process so it ends with up-to-date fill-empty pair (by Lemma \ref{lem:successful_fill_empty}, this is possible). 
    By the definition of a successful fill, there are $C$ distinct values stored in value-locations after it, so, since we have $C + X$ value-locations in total, there are at least $C - X$ of these new values that are not duplicated in value-locations, i.e. each of them appears in exactly one value-location.
    As $k = 2X + 3$, after the fill, there are at least $3$ locations among $l_1, \ldots, l_k$ such that a value of some enqueue from fill is stored only in that location. 
    Thus, there is $l_i$ that stores a value of some $enq$ from the fill procedure that is not a first $enq$, nor the last one, and this value is stored only in $l_i$. Using enqueue-value-independence we change the argument of the corresponding enqueue to $x_i$, reaching the state where after the fill $l_i$ is the only location storing $x_i$ and $x_i$ is returned in the following empty procedure. 
    		
    Therefore, if we replay this updated fill procedure we could perform poised \texttt{CAS}-s substituting $x_i$ with $y_i$ before the empty procedure. Thus, if at this point we replay the empty procedure, the dequeue that used to return $x_i$ now returns $y_i$ by dequeue-value independence.
    		
    By that, the last fill-empty-procedures pair have operations
    \begin{center}
    \begin{tabular}{|c|c|c|c|c|}
    \hline
    $enq(v_1)$ & $\cdots$  & $enq(x_i)$ & $\cdots$ & $enq(v_C)$\\
    \hline
    $deq \rightarrow v_1$ & $\cdots$ & $deq \rightarrow y_i$ & $\cdots$ & $deq \rightarrow v_C$\\
    \hline
    \end{tabular}  
    \vspace{0.2cm}
    \end{center}
    		
    We call a pair of type $\langle enqueue(v),\ dequeue \rightarrow v\rangle$ a \emph{matching} pair and it is \emph{true-matching} if in any given linearization, these enqueue and dequeue indeed match each other.\pk{Linearization of what? Matching where in which execution?}
    		
    We claim that $\langle enq(v_1), deq \rightarrow v_1\rangle$ and $\langle enq(v_C), deq()\rightarrow v_C\rangle$ are true-matching pairs in the constructed execution. 
    		
    To see this, note that for a matching pair to be not true-matching there should be another enqueue whose value is dequeued. But $v_1$ and $v_C$ are fresh values since the only non-fresh value among enqueues is $x_i$ and by the construction, it is not the last, nor the first enqueue. Thus, there can not be other poised $enq(v_1)$ or $enq(v_C)$. 
    		
    The question now is where to place $enq(y_i)$ to find a true-match ${deq \rightarrow y_i}$. It cannot be placed before $enq(v_1)$ or after $enq(v_C)$ since the pairs for $v_1$ and $v_C$ are true-matching and they cannot transpose due to FIFO property of the queue. So, it can only be placed between $enq(v_1)$ and $enq(v_C)$. 
    		
    But then we have $C+1$ successful enqueues in a row, so there must be $deq \rightarrow v_1$ before $enq(v_C)$ which means that the pair $\langle enq(v_1), deq \rightarrow v_1 \rangle$ from the table is not a true-match, but we have proven it already. 
    
    Thus, $enq(x_i)$ corresponds to $deq \rightarrow y_i$ which makes the execution not linearizable.
\end{proof}


We are finally ready to prove the main Theorem~\ref{thm:lb} and show that any obstruction-free linearizable value-in\-de\-pen\-dent implementation of Bounded Queue with capacity $C$ uses $C + \lfloor \frac{T - 24}{24} \rfloor$ value-locations, assuming that there exists at least $\Theta(T \cdot C)$ different values and $\frac{T}{2} < C$.

\begin{proof}[Proof of Theorem \ref{thm:lb}]
The proof proceeds by contradiction. 
Namely, we pick the number of additional value-locations $X$ small enough, though still linear in $T$, and show that for each implementation that uses at most $X$ additional value-locations, we can build a non-linearizable execution. 
We prove that it is sufficient to pick $X = \lfloor \frac{T - 24}{24} \rfloor$.
	
Our procedure to create non-linearizable execution is partitioned into four steps. 
During the first step, the queue is filled with fresh values, and $\frac{T}{2}$ processes are poised right before the CAS operation. 
In the second step, we just call empty and fill procedures in order to obtain fresh values in memory. The third and fourth steps are the most complicated. Depending on the BQ algorithm, we get three cases depending on poised operations. 
The goal of each case is to come up with an execution that brings the queue to the state required by Lemma~\ref{lem:universal}. 
Finally, using this lemma we build a non-linearizable execution.

Now, we dive into the details. Our construction algorithm works in four steps.

\paragraph{First step.} We try to extend an execution $\frac{T}{2}$ times with an up-to-date fill-empty pair using Lemma~\ref{lem:successful_fill_empty}. For each pair, we require a new \emph{idle} process, i.e., a process that has not started any operation. Though, for each try, we don't let a process finish an extension, rather we stop this process right before it is about to CAS at a not yet covered value-location from $\bot$. We say that we \emph{catch} the process when we stop it right before performing a CAS. When we catch the process, we restart with another process using Lemma~\ref{lem:successful_fill_empty}. We claim that using this procedure we can catch $\frac{T}{2}$ processes.

To see that, assume by the contrary that we do not catch $\frac{T}{2}$ processes. Thus, one of them managed to perform a successful fill. By the definition of successful fill, there must be $C$ fresh values stored in the value-locations after it, so a process made successful attempts to CAS $C$ value-locations and was not poised. This implies that all those locations are covered since all locations initially store $\bot$ and hence cannot be modified without being previously covered. But since one process covers only one location and $C > \frac{T}{2}$, $C$ locations cannot be covered, a contradiction. 

\paragraph{Second step.}
We extend an execution using a new idle process to end up with a successful fill. It is indeed possible by the trivial corollary of the Lemma \ref{lem:successful_fill_empty}: if we can ensure successful fill and empty, we can just stop after the fill. 

\paragraph{Third step.}
We try to extend an execution $\frac{T}{2}$ times with an up-to-date fill-empty pair using Lemma~\ref{lem:successful_fill_empty}. For each pair, we require a new idle process. Though, for each try, we do not let a process finish an extension, rather we stop this process right before it is about to CAS on a location that satisfies the following \emph{catch criteria}:
\begin{enumerate}[topsep=0pt]
    \item The corresponding value-location was not yet covered by some CAS at Step 3;
    \item This value-location is already covered by some CAS at Step 1;
    \item The poised CAS must be from a non-bottom value different from the previous \texttt{CAS}-s poised during Step~3.
\end{enumerate}

\paragraph{Fourth step.}
This step is to account for the case when there are few processes caught at Step 3.

If there are $\frac{T}{4} - 3X$ idle processes at this point, we consecutively extend the execution with a successful fill using each of them. It is indeed possible by the trivial corollary of the Lemma \ref{lem:successful_fill_empty}. We catch a process if it satisfies the following \emph{catch criteria}:
\begin{enumerate}
    \item The corresponding value-location is not yet covered by CAS at Step 4;
    \item This value-location is already covered by CAS at Step~1;
    \item The poised CAS must be from a non-bottom value (to possibly a bottom one) different from the previous \texttt{CAS}-s poised during Step 4.
\end{enumerate}
		
Now, we explain how our poised operations provide a state desired by Lemma~\ref{lem:universal}. This part becomes a little bit technical~--- we get \emph{three} cases depending on how many \texttt{CAS}-s we poised during Step 3:
\begin{itemize}
    \item Suppose that we poised less than $\frac{T}{4}$ processes. 
			
    After Step 2, the successful fill is performed, so $C$ fresh distinct values are stored in $C$ different locations. We have $\frac{T}{2}$ value-locations covered by \texttt{CAS}-s at Step 1, so, since we have $C + X$ value-locations in total, there are $\frac{T}{2} - 2X$ locations among covered that store unique values. Let us call the set of these locations that are not modified during Step 3 as $L'$. We claim that $|L'| > \frac{T}{4} - 2X$.
			
    To see this, assume the opposite, namely that $L'$ has at most $\frac{T}{4} - 2X$ value-locations. Then at least $(\frac{T}{2} - 2X) - |L'| = \frac{T}{4}$ of our chosen value-locations were modified. It means that during our procedure in Step 3, we made a CAS on these locations but did not catch it. This could happen only if one of the catch criteria of Step 3 was not satisfied.
    However, criteria (2) and (3) are always satisfied due to the way we chose locations: the chosen value-locations are already covered by some CAS from Step 1, and they stored non-bottom value (they store some of $C$ values from Step 2). Thus, the only criterion that could not be satisfied is (1), i.e., it is already covered by a CAS. But, we poised less than $\frac{T}{4}$ \texttt{CAS}-s in the main condition; thus, (1) is satisfied for some location~--- we could have poised at least one more CAS. Thus, $|L'|$ is greater than $\frac{T}{4} - 2X$.
			
    We claim that during Step 4, we catch $\frac{T}{4} - 3X$ processes. Otherwise, one of them was able to complete a successful fill, which implies that $C$ fresh values are stored, so at most $X$ locations are left unmodified. Thus, at least $|L'| - X = \frac{T}{4} - 3X$ locations in $L'$ must be overwritten~-- there will be attempts to successfully CAS and hence cover them. Let $L$ be the set of value-locations covered by poised operations from Step 4.		
    Finally, we apply Lemma~\ref{lem:universal} with $l_i$ being locations from $L$.

    We now need to show that the Lemma~\ref{lem:universal} requirements are satisfied. By the construction of $L$, all $l_i \in L$ are indeed distinct. For each $1 \leq i \leq |L|$ there is a poised CAS of the form $CAS(l_i, \bot, y_i)$ and a CAS of the form $CAS(l_i, x_i, mb_i)$. If $mb_i \neq \bot$, then we can change the value in $l_i$ from $x_i$ to $mb_i$. Otherwise, we can perform both \texttt{CAS}-s changing the value in $l_i$ from $x_i$ to $y_i$. We also need to show that all $x_i$ are distinct. It follows from the fact that all values in $L'$ are. Finally, $|L| = \frac{T}{4} - 3X \geq 2X + 3$, thus, we can choose $2X+3$ locations from $L$ and apply Lemma~\ref{lem:universal}.
			
    \item Suppose that in Step 3 at least $\frac{T}{4}$ processes are poised while more than $\frac{T}{8}$ of them CAS to non-bottom.
   
    Let $L$ be the set of value-locations covered by poised \texttt{CAS}-s updating to non-bottom.
			 
    Now, we can apply Lemma \ref{lem:universal} with $l$-s being locations from $L$.

    Let us check Lemma's requirements are satisfied. Locations in $L$ are distinct by the catch criteria. We can indeed substitute $x_i$ with $y_i$ in each $l_i$ applying the corresponding CAS. $x_i$ are non-bottom because of the catch criteria and $y_i$ are non-bottom because of the current case assumption. All $x_i$ are distinct because of the catch criteria. Finally, $|L| \geq \frac{T}{8} \geq 2X + 3$, thus, we can choose $2X+3$ locations from $L$ and apply Lemma~\ref{lem:universal}.
			
    \item Suppose that in Step 3 at least $\frac{T}{4}$ processes are poised and at least $\frac{T}{8}$ of them CAS to bottom.
			
    Here, we have at least $\frac{T}{8}$ pairs of the type $CAS(l_i, \bot, y_i)$ and $CAS(l_i, x_i, \bot)$.
			
    Note that $x$-s are those values that were present in the memory after Step 2, whereas $y$-s are those introduced during Step 1. Recall that by non-creative property $y$-s are all arguments of enqueue operations.
			
    Filling the queue with fresh values in Step 2 guarantees that at most $X$ values out of those $y$-s are present at the start of Step 3.
    Since all $x$-s are pairwise distinct by the catch criteria of Step 3, there can be no more than $X$ values from $y$-s among them. We no longer consider CAS pairs where CAS is done from $x$ which is equal to some $y_i$. However, there are at most $X$ of the non-satisfying pairs and we still have at least $\frac{T}{8} - X$ good $CAS(l_i, \bot, y_i), CAS(l_i, x_i, \bot)$ pairs.
    Let $L = \{l_i\}$ be the set of value-locations covered by these pairs. 
			
    Finally, we can apply Lemma~\ref{lem:universal} with $l_i$ being elements of $L$.

    Requirements of the Lemma are satisfied. Indeed, $l$-s are distinct and $x$-s are distinct by the catch criteria of Step 3. For each $l_i$ we can substitute $x_i$ with $y_i$ performing a pair of \texttt{CAS}-s. Moreover, $x$-s are distinct from $y$-s, since $y$-s are values from Step 1 and by the construction of $L$ there are no values from Step 1. Finally, $|L| = \frac{T}{8} - X \geq 2X + 3$,  thus, we can choose $2X+3$ locations from $L$ and apply Lemma~\ref{lem:universal}.
    \end{itemize} 
\end{proof}

\paragraph{System-wide overhead.}
Intuitively, one may share the overhead between multiple queues. Specifically, when using $k$ obstruction-free bounded queues of capacity $C$, the total memory overhead might remain $O(T)$ (not depending on $k$). However, our primary result is the impossibility of achieving constant overhead in a single bounded queue, significantly affecting the algorithm design and its trade-offs.


\subsection{The Upper Bound}
Finally, we show that our lower bound is tight by presenting an algorithm that asymptotically matches the lower bound and uses $O(T)$ extra memory. Please note that our goal is to show the upper bound of the memory overhead rather than provide a practical and efficient implementation.  

\paragraph{Algorithm overview.}
The key idea is to use descriptors for \texttt{enqueue(..)} operations,
with the possibility of reusing them~\cite{tbrown_recyclable_desc}, to match the desired memory overhead.
%
%
%
In addition, we use an $T$-size ``announcement'' array that stores references to descriptors of ``in-progress'' \texttt{enqueue(..)} invocations. 
Intuitively, a descriptor in this array declares an \emph{intention} to perform an enqueue operation on some chosen cell.
This enqueue succeeds if the \texttt{enqueues} counter has not been changed (which is similar to our \texttt{DCSS}-based solution above).
The descriptor stores a target cell to write and an operation status in \texttt{successful} field.
We say that a descriptor/operation/thread ``covers'' a cell if it intends to put an element there.

Thus, in total, we have $O(T)$ memory overhead that asymptotically matches the lower bound shown in the previous subsection. 

Now, we explain how descriptors and the announcement array help us achieve a correct algorithm.
As in the previous algorithms, a \texttt{dequeue()} (resp., \texttt{enqueue(..)}) operation starts with taking a snapshot of the counters in lines and checks if the queue is \emph{empty} (resp., \emph{full}). 

The implementation of \texttt{dequeue()} slightly differs from those in the algorithms from Section~\ref{sec:special-cases}. 
To read an element, it goes through the announcement array to check for the ongoing enqueue operations on the target cell, which is calculated from the counters.
If one exists, our dequeue takes the argument of that ongoing operation.
Otherwise, it takes a value from the target cell.
Finally, it tries to increment the \texttt{dequeues} counter and returns the found element if the corresponding \texttt{CAS} succeeds. 

As for \texttt{enqueue(..)}, it creates a special descriptor, announces it in our announcement array, and then tries to atomically apply the operation. 
Then, the operation increments the \texttt{enqueues} counter, possibly helping some concurrent operation. 
If the descriptor is successfully applied, the operation completes.
Otherwise, the operation is restarted.

There can be only one descriptor in the successful state that can ``cover'' a given cell in the elements array, no other successful descriptor can point to the same cell. 
Thus, only the thread that started covering the cell is eligible to update it.
When the thread with exclusive access for modifications finishes the operation, it ``releases'' the cell, so, another operation is able to ``cover'' it. 
However, when an enqueue operation finds a cell to be covered, it replaces the old descriptor (related to the thread that covers the cell now) with a new one and completes. 
The thread with the exclusive access to the cell helps this enqueue operation to put the element. 

Summing everything up, our algorithm is lock-free (Appendix~\ref{sec:algo_lf}) and it uses just $O(T)$ additional memory to $C$ values in the bounded queue.
Please note that we do not provide any guarantees about the execution time of operations on our BQ.
One should expect the queue to answer the queries in $O(1)$, however, in our algorithm each operation should traverse through the whole announcement array leading to $\Theta(T)$ time.
This leaves us with an open question whether it is possible to implement a memory-optimal queue that serves requests in $O(1)$ time.

\paragraph{Implementation details.} We discuss the implementation details and prove the algorithm's correctness in Appendix~\ref{sec:upper-bound}.

\section{Related Work}\label{sec:related}
Memory efficiency has always been one of the central concerns in concurrent computing. 
Many theoretical bounds have been established on memory requirements of various concurrent abstractions, such as mutual exclusion~\cite{BL93}, perturbable objects~\cite{JTT00}, or consensus~\cite{Zhu16}.
However, it appears that minimizing memory overhead in dynamic concurrent data structures has not been in the highlight until recently.     
A standard way to implement a lock-free bounded queue, the major running example of this paper, is to use descriptors \cite{valois1994implementing,pirkelbauer2016portable} or additional meta-information per each element \cite{tsigas2001simple,vyukov,shafiei2009non,feldman2015wait}. 
The overhead of resulting solutions is proportional to the queue size: a descriptor contains an additional $\Omega(1)$ data to distinguish it from a value, while an additional meta-information is $\Omega(1)$ memory appended to the value by the definition. 

The fastest queues, however, store multiple elements in nodes so the memory overhead per element is relatively small from the practical point of view~\cite{morrison2013fast, yang2016wait, LPRQ}, and we consider them as memory-friendly, though not memory-optimal.

\noindent
A notable exception is the work by Tsigas et al.~\cite{tsigas2001simple} that tries to answer our question: whether there exists a lock-free concurrent bounded queue with $O(1)$ additional memory.
The solution proposed in~\cite{tsigas2001simple} is still a subject to the ABA problem even when all the elements are different: it uses only two null-values, and if one process becomes asleep for two ``rounds'' (i.e., the pointers for \texttt{enqueue(..)} and \texttt{dequeue()} has made two traversals through all the elements), waking up it can incorrectly place the element into the queue. 
Besides resolving the issue, our algorithm, under the same assumptions, is shorter and easier to understand (see Subsection~\ref{subsec:ppopp}).

The tightest algorithm we found is the recent work by Nikolaev~\cite{nikolaev2019scalable}, that proposes a lock-free bounded queue with capacity $C$ implemented on top of an array with $2C$ memory cells. While the algorithm manipulates the counters via \texttt{Fetch-And-Add}, it still requires descriptors, one per each ongoing operation. This leads to the additional overhead linear in $T$. Thus, the total memory overhead is $\Omega(C + T)$, while the lower bound is $\Theta(T)$ and does not depend on the queue capacity.
%



\section{Discussion} \label{sec:disc}
In this paper, we proved that any non-blocking implementation of a bounded queue incurs $\Omega(T)$ memory overhead, while showing that the bound is asymptotically tight with a matching algorithm.
In Section~\ref{sec:special-cases}, we also presented a series of algorithms with \emph{constant} memory overhead that work under several practical restrictions on the system or/and the application, along with a simple \texttt{DCSS}-based algorithm that matches the lower bound but requires an ability to store values and descriptors in the same array slot. 

We find the following relaxations important to consider in future work: (1)~the single-producer and single-consumer application restrictions, (2)~the ability to store descriptors in value-locations (free in JVM and Go) or ``steal'' a couple of bits from addresses (free in C++) for containers of references, and (3)~relaxing the object semantics, e.g., to probabilistic guarantees.
Each of these cases corresponds to a popular class of applications, and determining the optimal memory overhead in these scenarios is appealing in practice.  



%

It would also be interesting to consider the problem of memory overhead incurred by \emph{wait-free} implementations since such solutions are typically more complicated.  

Finally, in this paper, we only focused on bounded queues, as they allow a natural definition of memory overhead.
One may try to extend the notion to the case of \emph{unbounded} queues by considering the \emph{ratio} between the amount of memory allocated for the metadata with respect to the memory allocated for data elements. 
Defining the bounds on this ratio for various data structures, such as pools, stacks, etc., remains an intriguing open question.

\begin{acks}
The authors would like to thank the anonymous referees and the shepherd for their valuable comments and helpful suggestions.
Petr Kuznetsov was supported by TrustShare Innovation Chair (Mazars \& CDD).
\end{acks}

\bibliographystyle{ACM-Reference-Format}
\bibliography{references}


\begin{thebibliography}{26}


\ifx \showCODEN    \undefined \def \showCODEN     #1{\unskip}     \fi
\ifx \showDOI      \undefined \def \showDOI       #1{#1}\fi
\ifx \showISBNx    \undefined \def \showISBNx     #1{\unskip}     \fi
\ifx \showISBNxiii \undefined \def \showISBNxiii  #1{\unskip}     \fi
\ifx \showISSN     \undefined \def \showISSN      #1{\unskip}     \fi
\ifx \showLCCN     \undefined \def \showLCCN      #1{\unskip}     \fi
\ifx \shownote     \undefined \def \shownote      #1{#1}          \fi
\ifx \showarticletitle \undefined \def \showarticletitle #1{#1}   \fi
\ifx \showURL      \undefined \def \showURL       {\relax}        \fi
\providecommand\bibfield[2]{#2}
\providecommand\bibinfo[2]{#2}
\providecommand\natexlab[1]{#1}
\providecommand\showeprint[2][]{arXiv:#2}

\bibitem[iou({[n.\,d.]})]%
        {iouring}
 \bibinfo{year}{[n.\,d.]}\natexlab{}.
\newblock \bibinfo{title}{Efficient IO with io\_uring}.
\newblock \bibinfo{howpublished}{\url{https://kernel.dk/io_uring.pdf}}.
\newblock


\bibitem[Arbel-Raviv and Brown(2017)]%
        {tbrown_recyclable_desc}
\bibfield{author}{\bibinfo{person}{Maya Arbel-Raviv} {and}
  \bibinfo{person}{Trevor Brown}.} \bibinfo{year}{2017}\natexlab{}.
\newblock \showarticletitle{Reuse, don't recycle: Transforming lock-free
  algorithms that throw away descriptors}. In \bibinfo{booktitle}{\emph{31st
  International Symposium on Distributed Computing (DISC 2017)}}.
\newblock


\bibitem[Attiya and Welch(2004)]%
        {AW04}
\bibfield{author}{\bibinfo{person}{Hagit Attiya} {and}
  \bibinfo{person}{Jennifer Welch}.} \bibinfo{year}{2004}\natexlab{}.
\newblock \bibinfo{booktitle}{\emph{Distributed Computing. Fundamentals,
  Simulations, and Advanced Topics.}}
\newblock \bibinfo{publisher}{John Wiley \& Sons}.
\newblock


\bibitem[Burns and Lynch(1993)]%
        {BL93}
\bibfield{author}{\bibinfo{person}{James~E. Burns} {and}
  \bibinfo{person}{Nancy~A. Lynch}.} \bibinfo{year}{1993}\natexlab{}.
\newblock \showarticletitle{Bounds on Shared Memory for Mutual Exclusion}.
\newblock \bibinfo{journal}{\emph{Inf. Comput.}} \bibinfo{volume}{107},
  \bibinfo{number}{2} (\bibinfo{year}{1993}), \bibinfo{pages}{171--184}.
\newblock


\bibitem[Developers(2022a)]%
        {DPDK}
\bibfield{author}{\bibinfo{person}{DPDK Developers}.}
  \bibinfo{year}{2022}\natexlab{a}.
\newblock \bibinfo{title}{Data Plane Development Kit (DPDK)}.
\newblock \bibinfo{howpublished}{\url{https://dpdk.org/}}.
\newblock


\bibitem[Developers(2022b)]%
        {SPDK}
\bibfield{author}{\bibinfo{person}{SPDK Developers}.}
  \bibinfo{year}{2022}\natexlab{b}.
\newblock \bibinfo{title}{Storage Performance Development Kit}.
\newblock \bibinfo{howpublished}{\url{https://spdk.io/}}.
\newblock


\bibitem[Feldman and Dechev(2015)]%
        {feldman2015wait}
\bibfield{author}{\bibinfo{person}{Steven Feldman} {and}
  \bibinfo{person}{Damian Dechev}.} \bibinfo{year}{2015}\natexlab{}.
\newblock \showarticletitle{A wait-free multi-producer multi-consumer ring
  buffer}.
\newblock \bibinfo{journal}{\emph{ACM SIGAPP Applied Computing Review}}
  \bibinfo{volume}{15}, \bibinfo{number}{3} (\bibinfo{year}{2015}),
  \bibinfo{pages}{59--71}.
\newblock


\bibitem[Gidenstam et~al\mbox{.}(2010)]%
        {gidenstam2010cache}
\bibfield{author}{\bibinfo{person}{Anders Gidenstam},
  \bibinfo{person}{H{\aa}kan Sundell}, {and} \bibinfo{person}{Philippas
  Tsigas}.} \bibinfo{year}{2010}\natexlab{}.
\newblock \showarticletitle{Cache-aware lock-free queues for multiple
  producers/consumers and weak memory consistency}. In
  \bibinfo{booktitle}{\emph{PODC}}. \bibinfo{pages}{302--317}.
\newblock


\bibitem[Harris et~al\mbox{.}(2002)]%
        {harris2002practical}
\bibfield{author}{\bibinfo{person}{Timothy~L Harris}, \bibinfo{person}{Keir
  Fraser}, {and} \bibinfo{person}{Ian~A Pratt}.}
  \bibinfo{year}{2002}\natexlab{}.
\newblock \showarticletitle{A practical multi-word compare-and-swap operation}.
  In \bibinfo{booktitle}{\emph{International Symposium on Distributed
  Computing}}. Springer, \bibinfo{pages}{265--279}.
\newblock


\bibitem[Herlihy et~al\mbox{.}(2003)]%
        {OF}
\bibfield{author}{\bibinfo{person}{Maurice Herlihy}, \bibinfo{person}{Victor
  Luchangco}, {and} \bibinfo{person}{Mark Moir}.}
  \bibinfo{year}{2003}\natexlab{}.
\newblock \showarticletitle{Obstruction-Free Synchronization: Double-Ended
  Queues as an Example}. In \bibinfo{booktitle}{\emph{23rd International
  Conference on Distributed Computing Systems {(ICDCS} 2003), 19-22 May 2003,
  Providence, RI, {USA}}}. \bibinfo{publisher}{{IEEE} Computer Society},
  \bibinfo{pages}{522--529}.
\newblock


\bibitem[Herlihy et~al\mbox{.}(2020)]%
        {herlihy2020art}
\bibfield{author}{\bibinfo{person}{Maurice Herlihy}, \bibinfo{person}{Nir
  Shavit}, \bibinfo{person}{Victor Luchangco}, {and} \bibinfo{person}{Michael
  Spear}.} \bibinfo{year}{2020}\natexlab{}.
\newblock \bibinfo{booktitle}{\emph{The art of multiprocessor programming}}.
\newblock \bibinfo{publisher}{Newnes}.
\newblock


\bibitem[Herlihy and Wing(1990)]%
        {herlihy1990linearizability}
\bibfield{author}{\bibinfo{person}{Maurice~P Herlihy} {and}
  \bibinfo{person}{Jeannette~M Wing}.} \bibinfo{year}{1990}\natexlab{}.
\newblock \showarticletitle{Linearizability: A correctness condition for
  concurrent objects}.
\newblock \bibinfo{journal}{\emph{ACM Transactions on Programming Languages and
  Systems (TOPLAS)}} \bibinfo{volume}{12}, \bibinfo{number}{3}
  (\bibinfo{year}{1990}), \bibinfo{pages}{463--492}.
\newblock


\bibitem[Jayanti et~al\mbox{.}(2000)]%
        {JTT00}
\bibfield{author}{\bibinfo{person}{Prasad Jayanti}, \bibinfo{person}{King Tan},
  {and} \bibinfo{person}{Sam Toueg}.} \bibinfo{year}{2000}\natexlab{}.
\newblock \showarticletitle{Time and Space Lower Bounds for Nonblocking
  Implementations}.
\newblock \bibinfo{journal}{\emph{{SIAM} J. Comput.}} \bibinfo{volume}{30},
  \bibinfo{number}{2} (\bibinfo{year}{2000}), \bibinfo{pages}{438--456}.
\newblock


\bibitem[Koval et~al\mbox{.}(2019)]%
        {koval2019scalable}
\bibfield{author}{\bibinfo{person}{Nikita Koval}, \bibinfo{person}{Dan
  Alistarh}, {and} \bibinfo{person}{Roman Elizarov}.}
  \bibinfo{year}{2019}\natexlab{}.
\newblock \showarticletitle{Scalable FIFO Channels for Programming via
  Communicating Sequential Processes}. In \bibinfo{booktitle}{\emph{European
  Conference on Parallel Processing}}. Springer, \bibinfo{pages}{317--333}.
\newblock


\bibitem[Koval et~al\mbox{.}(2023)]%
        {CHANNELS_PPOPP23}
\bibfield{author}{\bibinfo{person}{Nikita Koval}, \bibinfo{person}{Dan
  Alistarh}, {and} \bibinfo{person}{Roman Elizarov}.}
  \bibinfo{year}{2023}\natexlab{}.
\newblock \showarticletitle{Fast and Scalable Channels in Kotlin Coroutines}.
  In \bibinfo{booktitle}{\emph{Proceedings of the 28th ACM SIGPLAN Annual
  Symposium on Principles and Practice of Parallel Programming}} (Montreal, QC,
  Canada) \emph{(\bibinfo{series}{PPoPP '23})}. \bibinfo{publisher}{Association
  for Computing Machinery}, \bibinfo{address}{New York, NY, USA},
  \bibinfo{pages}{107–118}.
\newblock
\showISBNx{9798400700156}
\urldef\tempurl%
\url{https://doi.org/10.1145/3572848.3577481}
\showDOI{\tempurl}


\bibitem[Michael and Scott(1996)]%
        {MS96}
\bibfield{author}{\bibinfo{person}{Maged~M. Michael} {and}
  \bibinfo{person}{Michael~L. Scott}.} \bibinfo{year}{1996}\natexlab{}.
\newblock \showarticletitle{Simple, Fast, and Practical Non-Blocking and
  Blocking Concurrent Queue Algorithms}. In
  \bibinfo{booktitle}{\emph{Proceedings of the Fifteenth Annual {ACM} Symposium
  on Principles of Distributed Computing, Philadelphia, Pennsylvania, USA, May
  23-26, 1996}}, \bibfield{editor}{\bibinfo{person}{James~E. Burns} {and}
  \bibinfo{person}{Yoram Moses}} (Eds.). \bibinfo{publisher}{{ACM}},
  \bibinfo{pages}{267--275}.
\newblock


\bibitem[Morrison and Afek(2013)]%
        {morrison2013fast}
\bibfield{author}{\bibinfo{person}{Adam Morrison} {and} \bibinfo{person}{Yehuda
  Afek}.} \bibinfo{year}{2013}\natexlab{}.
\newblock \showarticletitle{Fast concurrent queues for x86 processors}. In
  \bibinfo{booktitle}{\emph{ACM SIGPLAN Notices}}, Vol.~\bibinfo{volume}{48}.
  \bibinfo{pages}{103--112}.
\newblock


\bibitem[Nikolaev(2019)]%
        {nikolaev2019scalable}
\bibfield{author}{\bibinfo{person}{Ruslan Nikolaev}.}
  \bibinfo{year}{2019}\natexlab{}.
\newblock \showarticletitle{A Scalable, Portable, and Memory-Efficient
  Lock-Free FIFO Queue}. In \bibinfo{booktitle}{\emph{DISC}}.
\newblock


\bibitem[Pirkelbauer et~al\mbox{.}(2016)]%
        {pirkelbauer2016portable}
\bibfield{author}{\bibinfo{person}{Peter Pirkelbauer}, \bibinfo{person}{Reed
  Milewicz}, {and} \bibinfo{person}{Juan~Felipe Gonzalez}.}
  \bibinfo{year}{2016}\natexlab{}.
\newblock \showarticletitle{A portable lock-free bounded queue}. In
  \bibinfo{booktitle}{\emph{ICAPP}}. \bibinfo{pages}{55--73}.
\newblock


\bibitem[Romanov and Koval(2023)]%
        {LPRQ}
\bibfield{author}{\bibinfo{person}{Raed Romanov} {and} \bibinfo{person}{Nikita
  Koval}.} \bibinfo{year}{2023}\natexlab{}.
\newblock \showarticletitle{The State-of-the-Art LCRQ Concurrent Queue
  Algorithm Does NOT Require CAS2}. In \bibinfo{booktitle}{\emph{Proceedings of
  the 28th ACM SIGPLAN Annual Symposium on Principles and Practice of Parallel
  Programming}} (Montreal, QC, Canada) \emph{(\bibinfo{series}{PPoPP '23})}.
  \bibinfo{publisher}{Association for Computing Machinery},
  \bibinfo{address}{New York, NY, USA}, \bibinfo{pages}{14–26}.
\newblock
\showISBNx{9798400700156}
\urldef\tempurl%
\url{https://doi.org/10.1145/3572848.3577485}
\showDOI{\tempurl}


\bibitem[Shafiei(2009)]%
        {shafiei2009non}
\bibfield{author}{\bibinfo{person}{Niloufar Shafiei}.}
  \bibinfo{year}{2009}\natexlab{}.
\newblock \showarticletitle{Non-blocking array-based algorithms for stacks and
  queues}. In \bibinfo{booktitle}{\emph{ICDCN}}. \bibinfo{pages}{55--66}.
\newblock


\bibitem[Tsigas and Zhang(2001)]%
        {tsigas2001simple}
\bibfield{author}{\bibinfo{person}{Philippas Tsigas} {and} \bibinfo{person}{Yi
  Zhang}.} \bibinfo{year}{2001}\natexlab{}.
\newblock \showarticletitle{A simple, fast and scalable non-blocking concurrent
  FIFO queue for shared memory multiprocessor systems}. In
  \bibinfo{booktitle}{\emph{SPAA}}. \bibinfo{pages}{134--143}.
\newblock


\bibitem[Valois(1994)]%
        {valois1994implementing}
\bibfield{author}{\bibinfo{person}{John~D Valois}.}
  \bibinfo{year}{1994}\natexlab{}.
\newblock \showarticletitle{Implementing lock-free queues}. In
  \bibinfo{booktitle}{\emph{ICPADS}}. \bibinfo{pages}{64--69}.
\newblock


\bibitem[Vyukov({[n.\,d.]})]%
        {vyukov}
\bibfield{author}{\bibinfo{person}{Dmitry Vyukov}.}
  \bibinfo{year}{[n.\,d.]}\natexlab{}.
\newblock \bibinfo{title}{Bounded {MPMC} Queue}.
\newblock
  \bibinfo{howpublished}{\url{http://www.1024cores.net/home/lock-free-algorithms/queues/bounded-mpmc-queue}}.
\newblock


\bibitem[Yang and Mellor-Crummey(2016)]%
        {yang2016wait}
\bibfield{author}{\bibinfo{person}{Chaoran Yang} {and} \bibinfo{person}{John
  Mellor-Crummey}.} \bibinfo{year}{2016}\natexlab{}.
\newblock \showarticletitle{A wait-free queue as fast as fetch-and-add}.
\newblock \bibinfo{journal}{\emph{ACM SIGPLAN Notices}} \bibinfo{volume}{51},
  \bibinfo{number}{8} (\bibinfo{year}{2016}), \bibinfo{pages}{16}.
\newblock


\bibitem[Zhu(2016)]%
        {Zhu16}
\bibfield{author}{\bibinfo{person}{Leqi Zhu}.} \bibinfo{year}{2016}\natexlab{}.
\newblock \showarticletitle{A tight space bound for consensus}. In
  \bibinfo{booktitle}{\emph{Proceedings of the 48th Annual {ACM} {SIGACT}
  Symposium on Theory of Computing, {STOC} 2016, Cambridge, MA, USA, June
  18-21, 2016}}, \bibfield{editor}{\bibinfo{person}{Daniel Wichs} {and}
  \bibinfo{person}{Yishay Mansour}} (Eds.). \bibinfo{publisher}{{ACM}},
  \bibinfo{pages}{345--350}.
\newblock


\end{thebibliography}

\appendix


\clearpage
\section{Memory-Optimal Bounded Queue}
\label{sec:upper-bound}

In this section we present an algorithm (Figure~\ref{lst:complicated}) that exhibits  $O(T)$ memory overhead only using read, write, and CAS primitives.
In Section~\ref{sec:app:lb}, we showed that the algorithm is (asymptotically) memory-optimal.

To address the ABA problem, in our algorithm, \texttt{enqueue(..)} operations use descriptors stored in pre-allocated metadata locations (each descriptor takes $\Theta(1)$ memory). 
To enable \emph{recycling}~\cite{tbrown_recyclable_desc}, we allocate $2 \cdot T$ descriptors.
For simplicity, we omit that from the pseudo-code: but this reclamation procedure indeed happens.
%
In addition, an $T$-size ``announcement'' array is used to store references to the descriptors.
Altogether, this gives $O(T)$ memory overhead. 

As in the previous algorithms, a  \texttt{dequeue()} (resp., \texttt{enqueue(..)}) operation starts with taking a snapshot of the counters in lines~\ref{line:hard:deq_dc_start}--\ref{line:hard:deq_dc_end} (resp., lines~\ref{line:hard:enq_dc_start}--\ref{line:hard:enq_dc_end})), and checks if the queue is \emph{empty} in line~\ref{line:hard:deq_emptiness} (resp., \emph{full} in line~\ref{line:hard:enq_fullness}). 

\paragraph{Overview.}
The implementation of \texttt{dequeue()} slightly differs from those in the algorithms from Section~\ref{sec:special-cases} above. 
To read an element to be retrieved during the snapshot (line~\ref{line:dequeue:readElem}), it uses a special \texttt{readElem} function. 
Then it tries to increment the \texttt{dequeues} counter and returns the element if the corresponding \texttt{CAS} succeeds (line~\ref{line:dequeue:inc}). 

As for \texttt{enqueue(..)}, it creates a special \texttt{EnqOp} descriptor  (declared in lines~\ref{line:enqop:start}--\ref{line:enqop:end}) and then tries to atomically apply the operation (line~\ref{line:enq_applydesc}). 
Then the operation increments the \texttt{enqueues} counter, possibly helping a concurrent operation (line~\ref{line:enqueue:inc}). 
If the descriptor is successfully applied, the operation completes.
Otherwise, the operation is restarted.

The algorithm maintains an array \texttt{ops} of \texttt{EnqOp} descriptors (line~\ref{line:enqops}), which specifies ``in-progress'' \texttt{enqueue(..)} invocations. 
Intuitively, an \texttt{EnqOp} descriptor declares an \emph{intention} to perform an enqueue operation that succeeds if the \texttt{enqueues} counter has not been changed (which is similar to our \texttt{DCSS}-based solution above).
The descriptor stores an operation status in \texttt{successful} field (line~\ref{line:enqop:successful}).
We say that a descriptor/operation/thread ``covers'' a cell if it has an intention to put an element there.
%

There can be only one descriptor in the successful state that can ``cover'' a given cell in the elements array \texttt{a}, no other successful descriptor can point to the same cell (see \texttt{putOp} function in lines~\ref{line:putop:start}--\ref{line:putop:end}). 
Thus, only the thread that started covering the cell is eligible to update it.
%
When the thread with an exclusive access for modifications finishes the operation, it ``releases'' the cell (see function \texttt{completeOp} in lines~\ref{line:completeOp:start}--\ref{line:completeOp:end}), so, another operation is able to \emph{cover} it. 
However, when an enqueue operation finds a cell to be covered, it replaces the old descriptor (related to the thread that covers the cell now) with a new one and completes (lines~\ref{line:apply:op_success_true}-\ref{line:apply:op_success_false}). 
The thread with the exclusive access to the cell helps this enqueue operation to put the element into \texttt{a}. 

\begin{figureAsListingWide}
\begin{minipage}[t]{0.47\textwidth}
\begin{lstlisting}[basicstyle=\footnotesize\selectfont\ttfamily]
// Descriptor for enqueues #\label{line:enqop:start}#
class EnqOp<T>(enqueues: Long,
               element: Type) { 
 val e = enqueues // `enqueues` value
 val x = element  // the inserting element
 val i = e % C // cell index in `a`
 // Op status: true, false, or #\color{Mahogany}$\perp$#;
 // we consider #\color{Mahogany}$\perp$# as the second 
 // 'false' in logical expressions.
 var successful: Bool? = #$\perp$# #\label{line:enqop:successful}#

 fun tryPut(): // performs the logical put
   // Is there an operation which
   // already covers cell 'i'?
   (op, opSlot) := findOp(i) #\label{line:tryput:checkCovered:start}#
   if op != #$\perp$# && op != this:
   #\indentrule#  CAS(&successful, #$\perp$#, false) #\label{line:tryput:checkCovered:end}#
   // Has 'enqueues' been changed?
   eValid := e == enqueues #\label{line:tryput:checkEnqChanged:start}#
   CAS(&successful, #$\perp$#, eValid) #\label{line:tryput:checkEnqChanged:end}#
} #\label{line:enqop:end}#

// Currently running enqueues
var ops: EnqOp?[] = new EnqOp?[T] #\label{line:enqops}#
// The next operation to be applied
var activeOp: EnqOp? = #$\perp$# #\label{line:activeop}#

func deq() Type? = while (true):
  #\indentrule#  d := dequeues; e := enqueues #\label{line:hard:deq_dc_start} \label{line:deq:read:e}#
  #\indentrule#  x := readElem(d % C) #\label{line:dequeue:readElem}#
  #\indentrule#  if d != dequeues: continue #\label{line:hard:deq_dc_end}#
  #\indentrule#  if e == d: return #$\perp$# #\label{line:hard:deq_emptiness}#
  #\indentrule#  if CAS(&dequeues, d, d + 1): return x #\label{line:dequeue:inc}#  

func enq(x: Type) Bool = while (true):
  #\indentrule#  e := enqueues; d := dequeues #\label{line:hard:enq_dc_start} \label{line:enq:read:d}#
  #\indentrule#  if e != enqueues: continue #\label{line:hard:enq_dc_end}#
  #\indentrule#  if e == d + C: return false #\label{line:hard:enq_fullness}#
  #\indentrule#  op := new EnqOp(e, x); apply(op) #\label{line:enqop:create}#  #\label{line:enq_createdesc} \label{line:enq_applydesc}#
  #\indentrule#  CAS(&enqueues, e, e + 1) #\label{line:enqueue:inc}#
  #\indentrule#  if op.successful: return #\label{line:enqueue:successful:check}#

// Puts 'op' into 'ops' and returns 
// its location, or #\color{Mahogany}-1# on failure.
func putOp(op: EnqOp) Int: #\label{line:putop:start}#
  for j in 1..#$\infty$#: #\label{line:putop:occupy:start}#
  #\indentrule#  opSlot := j % T 
  #\label{line:putop:opslot:choose}#
  #\indentrule#  if !CAS(&ops[opSlot], #$\perp$#, op): #\label{line:putop:put}#
  #\indentrule#  #\indentrule#  continue // occupied #\label{line:putop:occupy:end}#
  #\indentrule#  startPutOp(op)
  #\indentrule#  op.tryPut() // logical addition #\label{line:putOp:tryPut}#
  #\indentrule#  // Finished, free `activeOp`.
  #\indentrule#  CAS(&activeOp, op, #$\perp$#) #\label{line:putop:clean_activeop}#
  #\indentrule#  if !op.successful:
  #\indentrule#  #\indentrule#  ops[opSlot] = #$\perp$# // clean the slot #\label{line:putop:clean}#
  #\indentrule#  #\indentrule#  return -1
  #\indentrule#  return opSlot #\label{line:putop:end}#
\end{lstlisting}
\end{minipage}
\hfill
\begin{minipage}[t]{0.47\textwidth}
\begin{lstlisting}[firstnumber=59, basicstyle=\footnotesize\selectfont\ttfamily]
// Starts the 'op' addition.
func startPutOp(op: EnqOp) = while (true): #\label{line:startPutOp:start}#
  #\indentrule#  cur := activeOp #\label{line:startPutOp:helpStart}#
  #\indentrule#  if cur != #$\perp$#: // need to help
  #\indentrule#  #\indentrule#  cur.tryPut() #\label{line:startPutOp:tryPut}#
  #\indentrule#  #\indentrule#  CAS(&activeOp, cur, #$\perp$#) #\label{line:startPutOp:helpEnd}#
  #\indentrule#  if CAS(&activeOp, #$\perp$#, op): return #\label{line:startPutOp:set}# #\label{line:startPutOp:end}#

// Only the thread that covers the
// cell is eligible to invoke this.
func completeOp(opSlot: Int) = while (true): #\label{line:completeOp:start}#
  #\indentrule#  op := readOp(opSlot) // not #\color{Mahogany}$\perp$# #\label{line:completeop:readop}# 
  #\indentrule#  a[op.i] = op.x #\label{line:completeOp:dump}#
  #\indentrule#  CAS(&enqueues, op.e, op.e + 1) #\label{line:enqueue:inc:1}#
  #\indentrule#  if CAS(&ops[opSlot], op, #$\perp$#): return #\label{line:completeOp:end}# #\label{line:com}#

// Tries to apply 'op'.  
func apply(op: EnqOp): #\label{line:apply:start}#
  (cur, opSlot) := findOp(op.i) #\label{line:apply:findOp}#
  // Try to cover the cell by 'op'.
  if cur == #$\perp$#: #\label{line:apply:checkCovered}#
  #\indentrule# opSlot = putOp(op) #\label{line:apply:tryPutOp}#
  #\indentrule# // Complete `op` if the cell is covered 
  #\indentrule# if opSlot != -1: completeOp(opSlot)#\label{line:apply:completeOp}#
  #\indentrule# return
  // 'cur' already covers the cell.
  if cur.e >= e: // is 'op' outdated? #\label{line:apply:checkOutdated:start}#
  #\indentrule#  op.successful = false
  #\indentrule#  return #\label{line:apply:checkOutdated:end}#
  // Try to replace 'cur' with 'op'.
  op.successful = true #\label{line:apply:op_success_true}#
  if CAS(&ops[opSlot], cur, op): return#\label{line:apply:replace}#
  // The replacement failed.
  op.successful = false #\label{line:apply:op_success_false}# #\label{line:apply:end}#

// Looks for 'EnqOp' that covers cell #\label{line:readElem:startComment}#
// 'i', returns 'a[i]' if not found.
func readElem(i: Int) Type: #\label{line:readElem:start}#
  (op, _) := findOp(i)
  if op != #$\perp$#: return op.elem
  return a[i] // EnqOp is not found #\label{line:readElem:end}#

// Returns the operation located at #\label{line:readOp:startComment}#
// 'ops[opSlot]' if it is successful.
func readOp(opSlot: Int) EnqOp?: #\label{line:readOp:start}#
  op := ops[opSlot]
  if op != #$\perp$# && op.successful: return op
  return #$\perp$# #\label{line:readOp:end}#

// Returns a successful operation that #\label{line:findOp:startComment}#
// covers cell 'i', #\color{Mahogany}$\perp$# otherwise.
func findOp(i: Int) (EnqOp?, Int): #\label{line:findOp:start}#
  for opSlot in 0..n-1 {
  #\indentrule#  op := readOp(opSlot)
  #\indentrule#  if op != #$\perp$# && op.i == i: 
  #\indentrule#  #\indentrule#  return (op.value, opSlot)
  return #$\perp$# // not found #\label{line:findOp:end}#
\end{lstlisting}
\end{minipage}
\vspace{1em}
\caption{Bounded queue algorithm with $\Theta(T)$ overhead.
}
\label{lst:complicated}
\end{figureAsListingWide}

\paragraph{Reading an element.} 
Once a \texttt{dequeue()} operation reads the first element (line~\ref{line:dequeue:readElem}), it cannot simply read cell \texttt{a[d \% C]}, as there can be a successful \texttt{EnqOp} in \texttt{ops} array which covers the cell but has not written the element to the cell yet. 
Thus, we use a special \texttt{readElem} function (lines~\ref{line:readElem:start}--\ref{line:readElem:end}), which goes through array \texttt{ops} looking for a successful descriptor that covers the cell, and returns the corresponding element if one is found. 
If there are no such descriptors, \texttt{readElem} returns the value stored in array \texttt{a}.

Note that we invoke \texttt{readElem} between two reads of counter \texttt{dequeues} that should return identical values for the operation to succeed. 
Thus, we guarantee that the current \texttt{dequeue()} has not been missed the round during the \texttt{readElem} invocation, and, therefore, cannot return a value inserted during one of the next rounds. 
Concurrently, the latest successful \texttt{enqueue(..)} invocation to the corresponding cell might  either be ``stuck'' in the \texttt{ops} array or have  successfully written its value to the cell and completed; \texttt{readElem} finds the correct element in both scenarios.

\paragraph{Adding a new element.}
As discussed above, \texttt{enqueue(..)} creates a new \texttt{EnqOp} descriptor (line~\ref{line:enq_createdesc}), which tries to apply the operation if counter \texttt{enqueues} has not been changed (line~\ref{line:enq_applydesc}). 
The corresponding \texttt{apply} function is described in lines~\ref{line:apply:start}--\ref{line:apply:end}. 
First, it tries to find out if a concurrent operation   covers the corresponding cell in \texttt{a} (line~\ref{line:apply:findOp}). 
If no operation covers the cell, it tries to put \texttt{EnqOp} into \texttt{ops}; this attempt can fail if a concurrent \texttt{enqueue(..)} does the same{---}only one such \texttt{enqueue(..)} should succeed. 
If the current \texttt{EnqOp} is successfully inserted into \texttt{ops} (\texttt{putOp} returns a valid slot number in line~\ref{line:apply:tryPutOp}), the cell is covered and the current thread is eligible to update it, thus, the operation is completed (line~\ref{line:apply:completeOp}). 
Otherwise, if another operation covers the cell, we check if the descriptor belongs to the previous round (line~\ref{line:apply:checkOutdated:start}) and try to replace it with the current one (lines~\ref{line:apply:op_success_true}--~\ref{line:apply:op_success_false}). 
However, since the current thread can be suspended, the found descriptor might belong to the current or a subsequent round, or the replacement in line~\ref{line:apply:findOp} fails because a concurrent \texttt{enqueue(..)} succeeds before us; in this case, the \texttt{enqueue(..)} attempt fails.

An intuitive state diagram for \texttt{ops} slots is presented in Figure~\ref{fig:op_slot}.  
Starting from the initial empty state ($\perp$), an operation that reads the counter value \texttt{e$_i$} and intends to cover cell \texttt{a[e$_i$ \% C]} occupies the slot (the state changes to the ``yellow'' one). 
After that, it (or another helping thread) checks whether there exists an operation descriptor in other slots that already covers the same cell, and fails if so (moving to the ``red'' state), or successfully applies the operation and covers the slot (moving to the ``green'' state). 
At last, the operation writes the value to the cell and frees the slot (moving to the initial state). 
If a next-round \texttt{enqueue(..)} arrives while the cell is still covered, it replaces the old \texttt{EnqOp} descriptor with a new one (moving to the next ``green'' state). 
In this case, the operation is completed by the thread that covers the cell.
Otherwise, we just put $\perp$ instead of the descriptor in \texttt{ops}.


\begin{figure*}
\includegraphics[width=\textwidth]{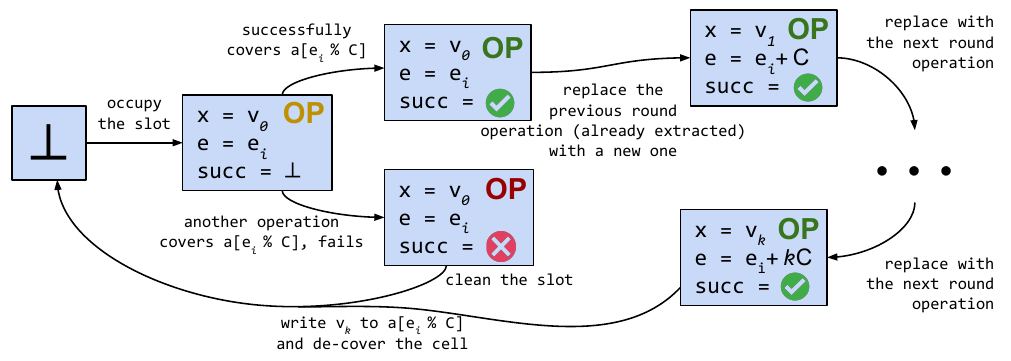}
\caption{Life-cycle of the \texttt{ops} array slots, all transitions are performed by successful \texttt{CAS}-s. The states marked with ``OP'' store a descriptor reference, while the color specifies the status of this descriptor: ``yellow'' means ``not defined yet'', ``red'' {---} ``failed'', and ``green'' -- ``success''.}
\label{fig:op_slot}
\end{figure*}

\paragraph{Atomic \texttt{EnqOp} put.}
Finally, we describe how to atomically change the descriptor status depending on whether the corresponding cell is covered. 
In our algorithm, we perform all such puts sequentially using the \texttt{activeOp} variable that defines the currently active \texttt{EnqOp} (line~\ref{line:activeop}).
To ensure lock-freedom, other threads can help modifying the status of the operation, and only the fastest among them succeeds. 
Hence,  \texttt{putOp} function (lines~\ref{line:putop:start}--\ref{line:putop:end}) circularly goes through the \texttt{ops} slots, trying to find an empty one and to insert the current operation descriptor into it (line~\ref{line:putop:put}). 
At this point, the status of this \texttt{EnqOp} is not defined.
Then the operation should be placed to \texttt{activeOp} via \texttt{startPutOp}, which tries to atomically change the field from $\bot$ to the descriptor (line~\ref{line:startPutOp:set}) and helps other operations if needed (lines~\ref{line:startPutOp:helpStart}--\ref{line:startPutOp:helpEnd}).
Afterwards, the status of the operation is examined in \texttt{tryPut} method in \texttt{EnqOp} class {---} it checks that the cell is not covered (lines~\ref{line:tryput:checkCovered:start}--\ref{line:tryput:checkCovered:start}) and the \texttt{enqueues} counter has not been changed (lines~\ref{line:tryput:checkEnqChanged:start}--\ref{line:tryput:checkEnqChanged:end}). 
At the end, \texttt{activeOp} is cleaned up (line~\ref{line:putop:clean_activeop}) so other descriptors can be processed.
If \texttt{tryPut} succeeds, \texttt{putOp} returns the corresponding slot number; otherwise, it cleans the slot (line~\ref{line:putop:clean}) and returns \texttt{-1}.

\subsection{Progress Guarantee} \label{sec:algo_lf}

We prove lock-freedom of our bounded queue algorithm.
We consider \texttt{dequeue()} and \texttt{enqueue(..)} operations separately.



\paragraph{The \texttt{dequeue()} operation.}
Following the code, the only place where the operation can restart is an unsuccessful \texttt{dequeues} counter increment (line~\ref{line:dequeue:inc}). However, each increment failure indicates that a concurrent \texttt{dequeue()} has successfully performed the same increment and completed. Lock-freedom for dequeues then trivially follows.

\paragraph{The \texttt{enqueue(..)} operation.} 
It is easy to notice that a successful \texttt{enqueues} counter increment indicates the system's progress. This way, if we will show that the following assumption is incorrect: the counter is never changing and all the in-progress enqueues are stuck in a live-lock, we automatically show the \texttt{enqueue(..)} lock-freedom.

Let's denote the current \texttt{enqueues} value, that cannot be changed, as $e_\textit{stuck}$. Our main idea is proving that there should appear a \emph{successful} descriptor with field \texttt{e} equals $e_\textit{stuck}$. In this case, following the code, if the enqueues are stuck in a live-lock, there should exist an infinite number of restarts in \texttt{enqueue(..)} operations. By that these \texttt{enqueue(..)} operations should find the successful descriptor in \texttt{ops} array, if there exists one, and, thus, the counter \texttt{enqueues} should be incremented at some point. Thus, the assumption is correct only if a descriptor with \texttt{successful} field set to \texttt{true} cannot be put into \texttt{ops}. 

Consider the function \texttt{apply} which is called by \texttt{enqueue(..)} operation (line~\ref{line:enq_applydesc}). There are two possible scenarios: if \texttt{findOp} in line~\ref{line:apply:findOp} finds a descriptor or not. Consider the first case. If it finds the descriptor $op$ with $op.e \geq e_\textit{stuck}$ then either we found the required descriptor or \texttt{enqueues} was incremented. Thus the found descriptor is from the previous round, the operation tries to replace it with a new one, already in the successful state (line~\ref{line:apply:replace}). The corresponding \texttt{CAS} failure indicate that either the previous-round operation is completed and the slot becomes $\bot$, or a concurrent \texttt{enqueue(..)} successfully replaced the descriptor with a new one with \texttt{op.e = $e_\textit{stuck}$}. Since the second case breaks our assumption, we consider that the previous-round operation put $\bot$ into the slot, which leads to the second scenario, when we do not find the descriptor since the cell is not covered.

In the second case, \texttt{findOp} in \texttt{apply} function (line~\ref{line:apply:findOp}) did not find a descriptor and tries to put its own. Let's assume now that each thread successfully finds and occupies an empty slot in a bounded number of steps in \texttt{putOp} function (lines~\ref{line:putop:occupy:start}--\ref{line:putop:occupy:end}). Thus, some descriptor with \texttt{op.e = $e_\textit{stuck}$} should be successfully set into \texttt{activeOp} (see function \texttt{startPutOp}), and the following \texttt{tryPut} invocation sets the status to the successful one since there is no other successful descriptor that covers the cell according to the main assumption. This means, that the only way not to break this assumption is that  is to never occupy a slot in \texttt{putOp}. However, since the \texttt{ops} size equals to the number of processes, the only way not to cover a slot during an array traversal, is that there exist an infinite amount of \texttt{EnqOp} descriptors which are put into \texttt{ops}. Obviously, \texttt{e} field on all of them is equal to $e_\textit{stuck}$ and at least one of them passes to \texttt{activeOp} and become successful by using the same argument as earlier.

As a result, we show that there should occur a successful descriptor with \texttt{op.e = $e_\textit{stuck}$} in \texttt{ops} array in any possible scenario, which breaks the assumption and, thus, provides the lock-freedom guarantee for \texttt{enqueue(..)}.

\subsection{Correctness} \label{sec:proof}
This subsection is devoted for the linearizability proof of our algorithm.
We split this subsection into two parts: the intuition of the proof and the full version.

\subsubsection{Intuition} 
For \texttt{dequeue()}, lock-freedom guarantee is immediate,  as the only case when the operation fails and has to retry is when the \texttt{dequeues} counter was concurrently incremented, which indicates that a concurrent \texttt{dequeue()} has made progress.

As for \texttt{enqueue(..)}, all the \texttt{CAS} failures except for when the \texttt{enqueues} counter increments (line~\ref{line:enqueue:inc} in \texttt{enqueue(..)} and~\ref{line:enqueue:inc:1} in \texttt{completeOp}) or the \texttt{ops} slot occupation (line~\ref{line:putop:put}) fail intuitively indicate the system's progress. 
%
In general, an \texttt{enqueue(..)} fails only due to helping, which does not cause retries. 
The only non-trivial situation is when a thread is stuck while trying to occupy a slot in \texttt{putOp}. Since the \texttt{ops} size equals the number of processes, it is guaranteed that when a thread intents to put an operation descriptor into \texttt{ops}, there is at least one free slot. 
Thus, if no slot can be taken during a traversal of \texttt{ops}, \texttt{enqueues} by other processes successfully take slots, and the system as a whole still makes progress. 

The linearization points of the operations can be assigned as follows. 
A successful \texttt{dequeue()} operation linearizes at successful \texttt{CAS} in line~\ref{line:dequeue:inc}.
A failed \texttt{dequeue()}  operation linearizes in line~\ref{line:deq:read:e}. 
For a successful \texttt{enqueue(..)} operation, we consider the descriptor \texttt{op} (created by that operation) that appears in \texttt{ops} array and has its \texttt{successful} field set. 
The linearization point of the operation is in \texttt{CAS} that changes \texttt{enqueues} counter from \texttt{op.e} to \texttt{op.e + 1}.
A failed \texttt{enqueue(..)} operation linearizes in line~\ref{line:enq:read:d}. 
One can easily check that queue operations ordered according to their linearization points constitute a correct sequential history. 

\subsubsection{Full proof}
At first, we suppose that all operations during execution are successful so the checks for emptiness and fullness never satisfy. This way, we are provided with an arbitrary finite history $H$, and we need to construct a linearization $S$ of $H$ by assigning a linearization point to each completed operation $\delta$ in $H$.
We prove several lemmas before providing these linearization points.

\begin{lemma}
At any time during the whole execution, it is guaranteed that $\texttt{dequeues} \leq  \texttt{enqueues} \leq \texttt{dequeues} + C$.
\end{lemma}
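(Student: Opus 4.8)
The plan is to prove both inequalities simultaneously by induction on the events of the execution, restricting attention to the only events that can change either counter. First I would record two structural facts that follow by inspection of the code: both counters start at $0$, and each is modified exclusively by a successful \texttt{CAS} that increments its value by exactly one --- \texttt{enqueues} at line~\ref{line:enqueue:inc} and at line~\ref{line:enqueue:inc:1}, and \texttt{dequeues} at line~\ref{line:dequeue:inc}. In particular, both counters are monotonically non-decreasing, and if a counter equals some value $v$ at two moments then it equals $v$ throughout the interval between them. These monotonicity observations are exactly what let me transport a guard condition, evaluated on locally read values, forward to the instant the corresponding \texttt{CAS} succeeds; they also justify that the \emph{double-collect} pattern yields a genuinely simultaneous state of the two counters.

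The base case is immediate, since $0 \le 0 \le 0 + C$. For the inductive step I would consider the next counter-changing event and split into two cases. If it increments \texttt{enqueues} from $e$ to $e+1$, this event originates in an \texttt{enqueue} whose double-collect (lines~\ref{line:hard:enq_dc_start}--\ref{line:hard:enq_dc_end}) guarantees that at the instant \texttt{d} was read we simultaneously had $\texttt{enqueues} = e$ and $\texttt{dequeues} = d$; the inductive hypothesis applied to that state gives $e \le d + C$, and the fullness check at line~\ref{line:hard:enq_fullness}, which the operation passed, sharpens this to $e < d + C$. Since \texttt{dequeues} is monotone, its value at the moment the increment \texttt{CAS} succeeds is at least $d$, so the new value satisfies $e + 1 \le d + C \le \texttt{dequeues} + C$, and the lower bound is preserved trivially because \texttt{enqueues} only increased. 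The symmetric case, where the event increments \texttt{dequeues} from $d$ to $d+1$ at line~\ref{line:dequeue:inc}, uses the dequeue's double-collect (lines~\ref{line:hard:deq_dc_start}--\ref{line:hard:deq_dc_end}) to pin down a simultaneous state with $\texttt{dequeues} = d$ and $\texttt{enqueues} = e$, whence the inductive hypothesis gives $d \le e$ and the emptiness check $e \ne d$ at line~\ref{line:hard:deq_emptiness} upgrades this to $d < e$, i.e. $d + 1 \le e$ by integrality; monotonicity of \texttt{enqueues} then yields $d + 1 \le e \le \texttt{enqueues}$ at the increment, preserving the lower bound, while the upper bound survives because \texttt{dequeues} only increased.

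One point I would take care to handle uniformly is that \texttt{enqueues} is incremented at two distinct sites --- line~\ref{line:enqueue:inc} inside \texttt{enq} and line~\ref{line:enqueue:inc:1} inside \texttt{completeOp} (the helping path). In both, the value being \texttt{CAS}-ed from is the field \texttt{e} of the descriptor created by some \texttt{enqueue}, which is precisely the \texttt{enqueues} value that passed that operation's fullness check; hence the argument above applies verbatim regardless of which thread performs the successful increment, and only one such \texttt{CAS} can succeed. I expect this bookkeeping --- identifying the originating \texttt{enqueue} of each successful increment and invoking its double-collect to exhibit a consistent simultaneous snapshot --- to be the only genuinely delicate part, as the arithmetic itself is routine once that snapshot state is in hand.
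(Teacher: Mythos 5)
Your proof is correct, but it is structured differently from the paper's. The paper argues by contradiction via a minimal counterexample: it takes the \emph{first} moment at which $\texttt{dequeues} = \texttt{enqueues} + 1$ (necessarily a successful \texttt{CAS} $\delta_1$ at line~\ref{line:dequeue:inc}), shows that the last preceding counter increment $\delta_0$ must itself have been on \texttt{dequeues} (otherwise an earlier violation existed), concludes that both counters were frozen and equal throughout $(\delta_0,\delta_1)$, and then observes that the violating dequeue's reads must fall inside this quiescent interval, so the emptiness check at line~\ref{line:hard:deq_emptiness} would have fired and $\delta_1$ could never execute; the bound $\texttt{enqueues} \le \texttt{dequeues} + C$ is dispatched as ``similar.'' You instead run a forward induction over counter-changing events, using the double-collect to pin down a genuinely simultaneous snapshot $(e,d)$, applying the inductive hypothesis \emph{at that snapshot instant}, sharpening it with the guard (fullness or emptiness check), and transporting the strict inequality to the moment of the successful \texttt{CAS} by monotonicity. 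Both arguments rest on the same two code facts --- counters change only by unit-increment \texttt{CAS}es, and every increment is preceded by a guarded double-collect --- but your decomposition buys two things the paper leaves implicit: it treats the two inequalities symmetrically rather than proving one and waving at the other, and it explicitly covers the helper increment at line~\ref{line:enqueue:inc:1} by tracing \texttt{op.e} back to the originating enqueue's snapshot and fullness check (line~\ref{line:hard:enq_fullness}); note that this helping path only matters for the direction the paper does \emph{not} spell out. Conversely, the paper's quiescent-interval argument is self-contained at the point of violation and never needs to invoke the invariant at an earlier instant.
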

\begin{proof}
We prove this by contradiction. Let \texttt{dequeues} exceed \texttt{enqueues}. Since the counters are always increased by one, we choose the first moment when \texttt{dequeues} becomes equal \texttt{enqueues + 1} which, obviously, happens at the corresponding \texttt{CAS} invocation $\delta_1$ in \texttt{dequeue()} (line~\ref{line:dequeue:inc}). Consider the situation right before this \texttt{CAS} $\delta_1$: \texttt{enqueues} and \texttt{dequeues} were equal. Consider the previous successful \texttt{CAS} $\delta_0$ that incremented \texttt{enqueues} or \texttt{dequeues} counter. Obviously, it was a \texttt{CAS} on \texttt{dequeues} since otherwise we chose not the first moment in the beginning of the proof. Thus, between $\delta_0$ and $\delta_1$ \texttt{enqueues} and \texttt{dequeues} did not change, and the emptiness check at line~\ref{line:hard:deq_emptiness} should succeed during the \texttt{dequeue()} invocation that has performed $\delta_1$, so the \texttt{CAS} $\delta_1$ cannot be executed.

The situation when \texttt{enqueues} exceed \texttt{dequeues + C} can be shown to be unreachable in a similar manner.
\end{proof}

Further, we say that a descriptor \texttt{EnqOp} is \emph{successful} if at some moment its \texttt{successful} field was set to \texttt{true} and it was in \texttt{ops} array.

\begin{lemma}
\label{lem:unique:per:e}
During the execution, for each value $pos \in [0, \texttt{enqueues}]$ there existed exactly one successful \texttt{EnqOp} object with the field \texttt{e} set to the specified $pos$.
\end{lemma}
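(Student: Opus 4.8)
The plan is to prove the two directions, existence and uniqueness, separately, both resting on the fact that \texttt{enqueues} is monotone non-decreasing and, by the previous lemma together with the code, is only ever advanced by one via a \texttt{CAS} (at line~\ref{line:enqueue:inc} in \texttt{enq} and line~\ref{line:enqueue:inc:1} in \texttt{completeOp}). Consequently, for every $pos$ the predicate $\texttt{enqueues}=pos$ holds over a single contiguous interval of the execution, and once it fails it never holds again. I would run the whole argument as an induction over the events of the execution, tracking jointly the invariant of this lemma and the auxiliary invariant that at most one successful \texttt{EnqOp} covers a given cell of \texttt{a} at any time (the property stated informally around \texttt{putOp}); the two are most naturally established together.

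For existence I would show that \texttt{enqueues} cannot be advanced from $pos$ to $pos+1$ before a successful descriptor with $\texttt{e}=pos$ has appeared. The increment inside \texttt{completeOp} is immediate: \texttt{completeOp} runs only on a slot whose \texttt{readOp} returned a non-$\perp$ descriptor, i.e.\ a successful one, and it CASes \texttt{enqueues} from that descriptor's \texttt{e} to $\texttt{e}+1$. For the direct increment in \texttt{enq} I would use monotonicity: the operation read $\texttt{enqueues}=pos$ at the start (so $\texttt{op.e}=pos$), and the \texttt{CAS} at line~\ref{line:enqueue:inc} succeeds only if \texttt{enqueues} is still $pos$, hence it stayed equal to $pos$ throughout this attempt. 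Therefore the only way \texttt{op} itself could have failed is that another descriptor already covered cell \texttt{op.i}, and such a descriptor must have $\texttt{e}=pos$: since \texttt{enqueues} never exceeded $pos$ during the attempt, no descriptor with larger \texttt{e} can yet exist, and a covering descriptor of an earlier round ($\texttt{e}\le pos-C$) is ruled out because its cell would already have been reassigned to round $pos$. This covering descriptor is the required witness.

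Uniqueness is the more delicate half, and its core is the claim that a descriptor with $\texttt{e}=pos$ can have its \texttt{successful} field set to \texttt{true} while residing in \texttt{ops} only while $\texttt{enqueues}=pos$. For the \texttt{tryPut} route this is explicit, since success there requires $\texttt{eValid}=(\texttt{e}=\texttt{enqueues})$ at line~\ref{line:tryput:checkEnqChanged:end}. The hard case, and the main obstacle, is the replacement route in \texttt{apply} (lines~\ref{line:apply:op_success_true}--\ref{line:apply:replace}), which writes $\texttt{successful}=\texttt{true}$ without re-reading \texttt{enqueues}; here I would lean on the round/coverage invariant to show that if $\texttt{enqueues}>pos$ then the descriptor \texttt{cur} returned by \texttt{findOp} either already satisfies $\texttt{cur.e}\ge pos$, so the outdated check at line~\ref{line:apply:checkOutdated:start} fires and \texttt{op} fails, or no older-round covering descriptor survives, so \texttt{findOp} returns $\perp$ and \texttt{op} falls into the \texttt{tryPut} path and fails on \texttt{eValid}.

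Granted this claim, uniqueness follows. Among the descriptors that attempt to become successful during the single interval $\texttt{enqueues}=pos$, the serialization through \texttt{activeOp} (a descriptor occupies its \texttt{ops} slot before becoming active, and a helper drives the currently active descriptor's \texttt{tryPut} to completion before the next one becomes active) guarantees that the first to succeed is already visible, in \texttt{ops} and with \texttt{successful} set, to every later contender. Each such contender therefore either finds it through \texttt{findOp} and fails the outdated test ($\texttt{cur.e}=pos\ge pos$), or observes \texttt{enqueues} already advanced to $pos+1$ (recall that \texttt{completeOp} increments the counter before releasing the slot) and fails on \texttt{eValid}; \texttt{CAS} atomicity on the \texttt{ops} slot settles the remaining race. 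Existence and uniqueness together give exactly one successful descriptor per $pos$.
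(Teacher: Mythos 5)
Your proposal has the right overall shape (existence because the counter cannot advance without a witness; uniqueness because two candidates for the same $pos$ are forced to see each other), but it has a genuine gap, and it sits exactly where you yourself point: the replacement route. Both halves of your argument dispose of old-round covering descriptors by fiat. In the existence half you assert that a covering descriptor of an earlier round ``is ruled out because its cell would already have been reassigned to round $pos$,'' and in the uniqueness half your dichotomy is ``either $\texttt{cur.e} \ge pos$, or no older-round covering descriptor survives, so \texttt{findOp} returns $\bot$.'' Neither claim is proved, and the second contradicts the algorithm's own design: a successful descriptor of round $\texttt{e} \le pos - C$ can remain in \texttt{ops}, still covering cell \texttt{$pos$ \% C}, long after \texttt{enqueues} has advanced, because its owner may be suspended inside \texttt{completeOp} before the cleaning \texttt{CAS} in line~\ref{line:com} while other threads help the counter forward. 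This lingering-descriptor situation is precisely what the branch in lines~\ref{line:apply:checkOutdated:start}--\ref{line:apply:end} exists to handle. In that case \texttt{findOp} returns the old \texttt{cur}, the new descriptor takes the replacement route, and it can become successful via line~\ref{line:apply:replace} without re-reading \texttt{enqueues} and without passing through the \texttt{activeOp} serialization, so neither of your two failure modes applies. Your existence argument also never treats the symmetric failure: the \texttt{CAS} in line~\ref{line:apply:replace} can fail not because a same-round rival was installed but because the covering thread's \texttt{completeOp} concurrently cleaned the slot ($\texttt{cur} \to \bot$), after which the enqueue still executes the unconditional \texttt{CAS} in line~\ref{line:enqueue:inc}; exhibiting a successful descriptor with $\texttt{e} = pos$ in that scenario is exactly what existence demands, and the proposal is silent on it. Finally, the ``round/coverage invariant'' that is supposed to carry all of this is named but never stated or proved, and its inductive step for precisely the replacement route is of the same difficulty as the lemma itself, so as written the plan is circular at its crux.

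For comparison, the paper's proof is far shorter and routes around your hard case rather than through it: existence is dispatched by observing that the counter \texttt{CAS}es in lines~\ref{line:enqueue:inc} and~\ref{line:enqueue:inc:1} can succeed only when the corresponding \texttt{EnqOp} exists, and uniqueness purely by the \texttt{activeOp} serialization~--- descriptors enter \texttt{ops} one at a time, so the checks in lines~\ref{line:tryput:checkCovered:start}--\ref{line:tryput:checkEnqChanged:end} are serialized and a later candidate for the same $pos$ must find the earlier one. The analysis of the replacement route that you (correctly) identify as the real obstacle is deferred by the paper to the linearizability proof of Appendix~\ref{sec:proof}, where the failure of the \texttt{CAS} in line~\ref{line:apply:replace} under an unchanged counter is argued to imply that another same-round descriptor was installed. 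So your instinct about where the difficulty lives is sound, and in that respect sharper than the paper's two-paragraph proof, but identifying the obstacle is not overcoming it, and the steps you substitute for the missing argument are false as stated.
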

\begin{proof}
At first, we show that there exists at least one \texttt{EnqOp} descriptor.
The \texttt{enqueues} counter can never increment if there does not exist the corresponding \texttt{EnqOp} object: \texttt{CAS} at line~\ref{line:enqueue:inc} is successful only if we put new object or the object exists, the same goes for line~\ref{line:enqueue:inc:1}.

This object is unique, since we add objects \texttt{EnqOp} into array \texttt{ops} one by one using \texttt{startPutOp} function. In other words, the checks in Lines~\ref{line:tryput:checkCovered:start}-\ref{line:tryput:checkEnqChanged:end} are performed in the serialized manner. Thus, the object cannot become successful if it finds another object with the same value in field \texttt{i} in line~\ref{line:tryput:checkCovered:start}.
\end{proof}

We denote the last \texttt{EnqOp} created by an \texttt{enqueue(..)} operation in line~\ref{line:enqop:create} as the \emph{last descriptor} of the operation.

\begin{lemma}
\label{lem:unique:per:op}
The last descriptor of \texttt{enq($x$)} either becomes successful or the operation never finishes. Also, exactly one \texttt{EnqOp} descriptor created by the operation becomes successful.
\end{lemma}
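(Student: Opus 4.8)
The plan is to prove a single correspondence and then read both claims off from it: \emph{the descriptor \texttt{op} that \texttt{enq($x$)} creates in a given iteration (line~\ref{line:enqop:create}) becomes successful if and only if \texttt{enq($x$)} returns on that same iteration}, i.e.\ reaches line~\ref{line:enqueue:successful:check} with \texttt{op.successful} holding \texttt{true}. Since \texttt{enq($x$)} only starts a new iteration---and hence only creates a new descriptor---after the previous iteration failed the test in line~\ref{line:enqueue:successful:check}, this correspondence will immediately tie ``finishing'' to ``the last descriptor is successful'' and will force uniqueness.

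The first step I would carry out is an auxiliary \emph{stability} claim: once \texttt{op.successful} takes a non-$\bot$ value, it never changes again, and this final value is fixed during the single call \texttt{apply(op)}. The only writers of the field are the two $\texttt{CAS}$-es inside \texttt{tryPut} (lines~\ref{line:tryput:checkCovered:end} and~\ref{line:tryput:checkEnqChanged:end}), which act only on a $\bot$ source and thus cannot overwrite a committed value, and the two plain writes in the replacement branch of \texttt{apply} (lines~\ref{line:apply:op_success_true} and~\ref{line:apply:op_success_false}). The latter are executed only by the creating thread, inside the unique call \texttt{apply(op)} (which is never run by a helper), and they live on a branch disjoint from the \texttt{tryPut} path, so the two mechanisms cannot interfere. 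Consequently the value tested in line~\ref{line:enqueue:successful:check} is exactly the final value of the field.

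Next I would analyse the three branches of \texttt{apply} (from line~\ref{line:apply:start}) and show in each that \texttt{op.successful} ends \texttt{true} precisely when, at some instant, \texttt{op} is \emph{simultaneously} present in \texttt{ops} and carries a \texttt{true} status---that is, exactly when \texttt{op} is successful in the sense of the preceding definition. When \texttt{cur} is $\bot$ (line~\ref{line:apply:checkCovered}) the descriptor is first installed into \texttt{ops} (line~\ref{line:putop:put}) and only afterwards does \texttt{tryPut} (line~\ref{line:putOp:tryPut}) set its status; a \texttt{true} status makes \texttt{putOp} return the slot without clearing it, while a \texttt{false} status makes \texttt{putOp} erase the slot (line~\ref{line:putop:clean}), so ``status \texttt{true}'' and ``present in \texttt{ops}'' coincide. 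In the outdated branch (line~\ref{line:apply:checkOutdated:start}) the status is set to \texttt{false} and \texttt{op} never enters \texttt{ops}. In the replacement branch the status is set \texttt{true} (line~\ref{line:apply:op_success_true}) just before the $\texttt{CAS}$ that installs \texttt{op} into \texttt{ops} (line~\ref{line:apply:replace}); if that $\texttt{CAS}$ succeeds the two conditions hold together, and if it fails the status is reset to \texttt{false} (line~\ref{line:apply:op_success_false}) while \texttt{op} never reached \texttt{ops}. Combining this branch analysis with the stability claim yields the correspondence.

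The two assertions then follow. If \texttt{enq($x$)} ever returns it does so on the iteration of its last descriptor, so by the correspondence that descriptor is successful; otherwise the operation loops forever---this is precisely ``the last descriptor becomes successful or the operation never finishes''. Since the correspondence equates successfulness of a descriptor with \texttt{enq($x$)} returning on its iteration, an event that occurs at most once, at most one descriptor of the operation is ever successful, hence exactly one when the operation completes. I expect the main obstacle to be the replacement branch, where \texttt{op.successful} is driven by \emph{plain} writes rather than by $\texttt{CAS}$: one must argue that the transient \texttt{true} written in line~\ref{line:apply:op_success_true} before a \emph{failing} $\texttt{CAS}$ in line~\ref{line:apply:replace} never coincides with \texttt{op} being present in \texttt{ops}, so it cannot spuriously make \texttt{op} successful, and that no concurrent \texttt{tryPut}---which can fire only through the \texttt{activeOp}/\texttt{putOp} machinery, unused on this branch---can touch the field there.
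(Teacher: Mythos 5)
Your proposal is correct and follows essentially the same route as the paper's (much terser) proof: both arguments hinge on the observation that \texttt{op.successful} can be \texttt{true} at the check in line~\ref{line:enqueue:successful:check} only if \texttt{op} was installed in \texttt{ops}, so the flag read there simultaneously decides termination of the iteration and successfulness of the descriptor, which yields both claims. Your stability claim as first stated is slightly too strong---the failed-replacement branch does write \texttt{true} and then \texttt{false}---but you identify and repair exactly this wrinkle yourself (the transient \texttt{true} never coincides with membership in \texttt{ops}), so the argument is sound and in fact spells out the uniqueness direction that the paper leaves implicit.
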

\begin{proof}
We note that if the descriptor does not become successful then the operation restarts at line~\ref{line:enqueue:successful:check}.
Note that the flag \texttt{successful} cannot be \texttt{true} at line~\ref{line:enqueue:successful:check} if the object was not added to \texttt{ops} array.
Thus, it should become successful.
\end{proof}

Since by Lemma~\ref{lem:unique:per:e} for each value $pos$ there exists exactly one \texttt{EnqOp} descriptor with the specified $pos$ and by Lemma~\ref{lem:unique:per:op} \texttt{enqueue(..)} operation creates only one successful description, we can make a bijection between \texttt{enqueue(..)} operations and the positions $pos$: that are chosen as the value of the field \texttt{e} field in the last descriptor of the operation. (Note, that there is an obvious bijection between operations and descriptors.) Thus, we can say ``the position of the enqueue'' and ``the enqueue of the position''. Also, we say that \texttt{EnqOp} descriptor $op$ covers a cell $id$ if \texttt{op.i = id}.

The linearization points are defined straightforwardly. For \texttt{deq()} the linearization point is at the successful \texttt{CAS} performed in line~\ref{line:dequeue:inc}. The linearization point for $\pi=\texttt{enq($x$)}$ is the successful \texttt{CAS} performed on \texttt{enqueues} counter in line~\ref{line:enqueue:inc:1} or line~\ref{line:enqueue:inc} from $E$ to $E + 1$ where $E$ is the position of $\pi$.

Let $S_{\sigma}$ be the linearization of the prefix $\sigma$ of the execution $H$. Also, we map the state of our concurrent queue after the prefix $\sigma$ of $H$ to the state of the sequential queue $Q_{\sigma}$ provided by the algorithm in Figure~\ref{fig:sequential}. This map is defined as follows: the values of the counters \texttt{enqueues} and \texttt{dequeues} are taken as they are, and for each $pos \in [\texttt{dequeues}, \texttt{enqueues})$ the value in the position \texttt{$pos$ \% C} contains $op.x$ of successful \texttt{EnqOp} $op$ with $op.e = pos$ from \texttt{ops} array or, if such $op$ does not exist, simply \texttt{a[$pos$ \% C]}.

\begin{lemma}
\label{lem:specification}
The sequential history $S_H$ complies with the \texttt{queue} specification.
\end{lemma}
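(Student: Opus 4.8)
The goal is to prove Lemma~\ref{lem:specification}: the sequential history $S_H$ obtained from the assigned linearization points complies with the queue specification. Since the linearization points have already been fixed (a successful \texttt{deq()} linearizes at the \texttt{CAS} in line~\ref{line:dequeue:inc}, and an \texttt{enq($x$)} at the \texttt{CAS} incrementing \texttt{enqueues} from its position $E$ to $E+1$), the plan is to verify that, when operations are ordered by these points, each operation returns exactly what the sequential specification of Figure~\ref{fig:sequential} would return from the corresponding state. The natural device is the state map $\sigma \mapsto Q_\sigma$ already defined in the excerpt, which sends each concurrent prefix to a sequential-queue state: I would prove by induction on the prefix $\sigma$ of $H$ that $S_\sigma$ is a valid sequential history and that executing $S_\sigma$ on the reference sequential queue yields exactly the mapped state $Q_\sigma$.

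First I would set up the induction. The base case is the empty prefix, where both counters are $0$ and the array is all $\bot$, matching the initial sequential state trivially. For the inductive step I would only need to examine prefixes that differ by a \emph{single linearization-point event}, i.e.\ a successful \texttt{CAS} on one of the counters, since those are the only events that change $S_\sigma$ and, by the definition of the map, the only events that can change $Q_\sigma$ in a relevant way. There are exactly two kinds of such events to analyze: an increment of \texttt{dequeues} (line~\ref{line:dequeue:inc}), which linearizes a \texttt{deq()}, and an increment of \texttt{enqueues} from $E$ to $E+1$ (line~\ref{line:enqueue:inc} or line~\ref{line:enqueue:inc:1}), which linearizes the enqueue of position $E$.

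The core of the argument is then two case analyses tying the returned value to the mapped state. For the \texttt{enqueue} case, I would invoke Lemma~\ref{lem:unique:per:e} and Lemma~\ref{lem:unique:per:op}: the enqueue of position $E$ corresponds to a unique successful \texttt{EnqOp} with \texttt{e}$=E$, whose element \texttt{op.x} is precisely what the map places in cell $E \bmod C$ the moment \texttt{enqueues} crosses $E$; the counter bounds lemma ($\texttt{dequeues} \le \texttt{enqueues} \le \texttt{dequeues} + C$) guarantees the queue is not full at that point, so the sequential \texttt{enq} stores the same value into the same cell and returns \texttt{true}. For the \texttt{dequeue} case, the harder direction, I must show that the value returned by the concurrent operation equals \texttt{a[d \% C]} in the mapped state at the linearization instant. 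The subtlety is that \texttt{readElem} was executed \emph{before} the linearizing \texttt{CAS}, possibly reading from the \texttt{ops} array rather than \texttt{a}; here I would use the double-collect guard (the two reads of \texttt{dequeues} returning the same value) together with the \texttt{readElem}/\texttt{findOp} correctness claim from the overview---namely that \texttt{readElem} returns the element of the successful descriptor covering the cell, or \texttt{a[d \% C]} if none---to argue that the value read coincides with the mapped cell content at the linearization point, which by the map equals the element enqueued at position $d$.

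The main obstacle I anticipate is exactly this coordination between the \emph{timing} of \texttt{readElem} (executed strictly before the linearizing \texttt{CAS}) and the \emph{snapshot} of the mapped state taken \emph{at} the \texttt{CAS}. I must rule out that a concurrent successful \texttt{enqueue} or a helping \texttt{completeOp} moves the element between the array cell \texttt{a} and the descriptor in \texttt{ops} in a way that makes \texttt{readElem}'s returned value disagree with the mapped state. The clean way is to establish an invariant: for each position $pos \in [\texttt{dequeues}, \texttt{enqueues})$, the value logically stored at cell $pos \bmod C$ --- whether physically in \texttt{a[pos \% C]} or carried by the unique successful \texttt{EnqOp} with \texttt{e}$=pos$ --- is fixed as soon as \texttt{enqueues} passes $pos$ and does not change until that position is dequeued; the \texttt{a[op.i] = op.x} write in \texttt{completeOp} (line~\ref{line:completeOp:dump}) only migrates an already-fixed value from the descriptor into the array without altering it, so the map is well-defined and stable. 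Once this invariant is in place, both the \texttt{enqueue} and \texttt{dequeue} cases follow, and the emptiness/fullness checks never spuriously succeed under the standing assumption that all operations in $H$ are successful, completing the induction and hence the lemma.
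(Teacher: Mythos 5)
Your proposal is correct and follows essentially the same route as the paper's own proof: the same induction on prefixes of $H$, the same state map $Q_\sigma$ matched against the sequential queue of Figure~\ref{fig:sequential}, the same case analysis on the counter-incrementing \texttt{CAS} events, and the same two key ingredients (the double-collect/\texttt{readElem} argument for dequeues, and the unique-successful-descriptor lemmas for enqueues). The stability invariant you flag as the ``main obstacle'' is precisely the paper's remaining case, where it verifies that descriptor replacement (line~\ref{line:apply:replace}) and the \texttt{completeOp} dump (line~\ref{line:completeOp:dump}) leave $Q_\sigma$ unchanged.
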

\begin{proof}
We use induction on $\sigma$ to show that for every prefix $\sigma$ of $H$, $S_{\sigma}$ is a queue history and the state of the sequential queue $L_{\sigma}$ from Figure~\ref{fig:sequential} after executing $S_{\sigma}$ is equal to $Q_{\sigma}$. The claim is clearly true for $\sigma = \epsilon$. For the inductive step we have $\sigma=\sigma_1 \circ \delta$ and $L_{\sigma_1}$ after $S_{\sigma_1}$ coincides with $Q_{\sigma_1}$.

Suppose, $\delta$ is the successful \texttt{CAS} in line~\ref{line:dequeue:inc} of operation $\pi = \texttt{deq()}$. Consider the last two reads \texttt{d := dequeues}, $\delta_1$, and \texttt{e := enqueues}, $\delta_2$, in Line~\ref{line:hard:deq_dc_start} in $\pi$. Suppose that $\delta_1$ reads $D$ as \texttt{d} and $\delta_2$ reads $E$ as \texttt{e}. Let $\sigma_2$ be the prefix of $H$ that ends on $\delta_2$. Since we forbid $\pi$ to satisfy the emptyness property we could totally say that the linearization point of the \texttt{enqueue(..)} of position $D$ already passed since $E > D$, otherwise, $\pi$ should have been restarted in line~\ref{line:hard:deq_dc_end}. Thus, by induction $Q_{\sigma_2}$ and $L_{\sigma_2}$ both contains same $v$ at position \texttt{d \% C}. The same holds for $Q_{\sigma_1}$ and $L_{\sigma_1}$. Thus, between $\delta_2$ and $\delta$ queue $Q$ always contains $v$ in \texttt{d \% C} such as queue $L$. By the definition of $Q$ this means during that interval there is always exist either successful \texttt{EnqOp} $op$ with $op.e = D$ and $op.x = v$ or $v$ is stored in \texttt{a[$D$ \% C]}, thus \texttt{readElem(d \% C)} in line~\ref{line:dequeue:readElem} would correctly read $v$. This means, that the result of \texttt{deq()} in both, the concurrent and the sequential, queues coincides.

Suppose, $\delta$ is the successful \texttt{CAS} in line~\ref{line:enqueue:inc} of operation $\pi = \texttt{enq($v$)}$.
Suppose that during \texttt{e := enqueues} in line~\ref{line:hard:enq_dc_start}, $\delta_1$, we read $E$.
We want to prove that three things: 1)~there already existed successful \texttt{EnqOp} $op$ with $op.e = E$; 2)~there is no \texttt{EnqOp} $op$ with \texttt{op.e < $E$} and \texttt{op.i = $E$ \% C} in \texttt{ops} array; and 3)~none of \texttt{EnqOp} descriptor $op_2$ with \texttt{$op_2$.e > $E$} and \texttt{$op_2$.i = $E$ \% C} exist before $\delta$.
It will be enough since by our mapping the cell \texttt{$E$ \% C} will contain the same value in both queues $Q_{\sigma}$ and $L_{\sigma}$, because if there is no successful descriptor $op$ in \texttt{ops} with \texttt{op.e = $E$} then it already dumps its value to array \texttt{a} in line~\ref{line:completeOp:dump} and nobody could have overwritten in.

Note, that the second and the third part can be proved very simply. As for the second part, if we find an outdated descriptor with the same \texttt{i} we replace it to the new one in line~\ref{line:apply:replace}. As for the third part, since \texttt{enqueues} does not exceed $E$ before $\delta$, then no process can even create \texttt{EnqOp} descriptor $op$ with \texttt{op.e > $E$}.

This means, that we are left to prove the first part. Since the \texttt{CAS} in line~\ref{line:enqueue:inc}, $\delta$, is successful then \texttt{enqueues} counter did not change in-between $\delta_1$ and $\delta$. Thus, during \texttt{apply} operation in line~\ref{line:enq_applydesc} \texttt{enqueues} was constant. Let us look on what this operation is doing. At first, it tries to find a successful descriptor \texttt{EnqOp} $op$ for which \texttt{$op = E$ \% C} in line~\ref{line:apply:findOp}. If it does not exist (the check in line~\ref{line:apply:checkCovered} succeeds) it tries to put its own descriptor by using the function~\texttt{putOp} in line~\ref{line:apply:tryPutOp}. As the first step, it tries to put not yet successful \texttt{EnqOp} $op$ into array \texttt{ops} (line~\ref{line:putop:put}). The operation did not become successful in \texttt{tryPut()} function only if either there exists \texttt{EnqOp} $op$ with \texttt{$op = E$ \% C} or \texttt{enqueues} changes. Since, \texttt{enqueues} cannot change as we discussed prior, thus in line~\ref{line:tryput:checkCovered:start} finds \texttt{EnqOp} $op$ with \texttt{$op.i = E$ \% C}. Note that since \texttt{enqueues} did not change and \texttt{findOp} in line~\ref{line:apply:findOp} did not find an operation, then $op.e$ has to be equal to $E$, thus, providing with what we desired.
In the other case, \texttt{apply} finds \texttt{EnqOp} $op$ with \texttt{$op.i = E$ \% C}. Note that since \texttt{enqueues} does not change $op.i$ cannot exceed $e$. Thus, if the check in line~\ref{line:apply:checkOutdated:start} succeeds, $op.e$ should be equal to $E$ and we are done. Otherwise, this means that we found some old $op$ and we try to replace it by \texttt{CAS} in line~\ref{line:apply:replace}. Due to the fact that \texttt{enqueues} does not change, \texttt{CAS} can fail only if another \texttt{EnqOp} descriptor $op$ with $op.e = E$ is successfully placed. By that, there existed \texttt{EnqOp} object that we desired. Thus, after we add $\delta$ to $\sigma_1$, $Q_{\sigma}$ and $L_{\sigma}$ coincides.

Suppose, $\delta$ is the successful \texttt{CAS} in line~\ref{line:enqueue:inc:1} of operation $\pi = \texttt{enq($v$)}$. As in the previous case, it is enough to prove two things: 1) there already exists corresponding \texttt{EnqOp}; 2) there are no other \texttt{EnqOp} that cover the same cell in \texttt{ops} array. The second statement is easy to prove as before. For the first one, we note that we apply $\delta$ only if we find a successful descriptor by \texttt{readOp(opSlot)} in line~\ref{line:completeop:readop}.

Finally, suppose that $\delta$ is none of the successful \texttt{CAS} in Lines~\ref{line:dequeue:inc}, \ref{line:enqueue:inc} and \ref{line:enqueue:inc:1}. Thus, $L_{\sigma}$ does not change between $\sigma_1$ and $\sigma$. At the same time, $Q_{\sigma}$ can change only due to the removal of \texttt{EnqOp} descriptor from array \texttt{ops}. This can happen in two places: successful \texttt{CAS} operations in line~\ref{line:apply:replace} and in line~\ref{line:com}. In the first case, nothing happens since we simply replace old \texttt{EnqOp} descriptor with the new one, thus $Q_\sigma$ remains the same. In the second case, the successful \texttt{CAS} means that no new descriptor is stored in array \texttt{ops} with the same value in field~\texttt{i} and \texttt{enqueues} does not yet come to the next round, otherwise, $op$ read in line~\ref{line:completeop:readop} should have been replaced by \ref{line:apply:replace}. Thus, the value stored in array \texttt{a} is exactly what was stored by $op$ until \texttt{dequeues} passes through that position.
\end{proof}

Note that the proof of the previous lemma is enough to show that the implementation is linearizable if we throw away operations for which check for emptiness or fullness is satisfied. However, such operations are very easy to linearize. Unsuccessful \texttt{deq()} is linearized at line~\ref{line:deq:read:e} while unsuccessful \texttt{enq($v$)} is linearized at line~\ref{line:enq:read:d}. Note that the correctness of the choice of the linearization points can be proved the same way as above, since \texttt{enqueues} and \texttt{dequeues} counters of $Q_{\sigma}$ match\texttt{enqueues} and \texttt{dequeues} counters of $L_{\sigma}$.

\end{document}